\newtheorem{thm}{Theorem}[section]
\newtheorem{cor}[thm]{Corollary}
\newtheorem{prop}[thm]{Proposition}
\newtheorem{lemma}[thm]{Lemma}
\newtheorem{remark}[thm]{Remark}
\newtheorem{conj}[thm]{Conjecture}
\newenvironment{proof}{{\bf Proof:}}{\hfill$\square$\vskip.5cm}
\newcommand{\R}{\mathbb{R}}
\newcommand{\C}{\mathbb{C}}
\newcommand{\Z}{\mathbb{Z}

}
\newcommand{\x}{\boldsymbol{x}}
\newcommand{\y}{\boldsymbol{y}}
\renewcommand{\Z}{{\mathbb{Z}}}
\newcommand{\Hil}{\mathcal{H}}
\begin{document} 

\date{23 November 2024}


\title{Violation of Ferromagnetic Ordering of Energy Levels in Spin Rings
for the Singlet}
\author{
\large David Heson${}^{1}$, Shannon Starr${}^{2}$ and Jacob Thornton${}^{3}$\\
\textcolor{white}{.}\\
\small ${}^{1}$ Mississippi State University\\[-2pt]
\small Department of Physics and Astronomy\\[-2pt]
\small 355 Lee Boulevard\\[-2pt]
\small Mississippi State, MS 39762\\
\textcolor{white}{.}\\
\small ${}^{2}$ University of Alabama at Birmingham\\[-2pt]
\small Department of Mathematics\\[-2pt]
\small  1402 Tenth Avenue South\\[-2pt]
\small  Birmingham, AL 35294-1241\\[-2pt]
\small  \href{mailto:slstarr@uab.edu}{slstarr@uab.edu}\\
\textcolor{white}{.}\\
\small ${}^{3}$ Auburn University\\[-2pt]
\small Department of Chemical Engineering\\[-2pt]
\small Samford Hall\\[-2pt]
\small 182 S College Street\\[-2pt]
\small Auburn University, AL 36849
}

\maketitle

\abstract{We demonstrate a violation of the ``ferromagnetic ordering of energy levels''
conjecture (FOEL) for even length spin rings. The FOEL conjecture was a guess made by Nachtergaele,
Spitzer and an author for the Heisenberg model on certain graphs: a family of inequalities,
the first of which is the statement that the spectral gap of the Heisenberg model
equals the gap of the  random walk. That first guess was originally a conjecture of Aldous which was later proved
by Caputo, Liggett and Richthammer.
We claim that for  spin rings of even length $L>4$, the lowest spin $S=0$ energy is lower than the lowest
spin $S=1$ energy. This violates the $(L/2)$-th inequality in the FOEL conjecture.

Our methodology is largely numerical: we have applied exact diagonalization up to $L=20$.
We also rigorously consider the Hamiltonian of the Heisenberg spin ring for even length $L$ projected to the spin $S=0$ sector. We prove that it has a unique ground state. Then, using the single mode approximation the uniqueness explains the energy turn-around.
Important insight comes from reconsideration of previous work by Sutherland, using the Bethe ansatz.
Especially important is a work of Dhar and Shastry that goes beyond the Bethe ansatz.}

\section{Set-up}

Let $T_L$  denote the ring graph: the graph whose vertex set is $\{1,\dots,L\}$ and with edges
between $i$ and $[i+1]$  where we denote: $[i+1]=i+1$ for $i<L$, and $[L+1]=1$.
(The letter $T$ stands for 1-dimensional torus.)
Consider the spin-$1/2$ quantum Heisenberg ferromagnet on $T_L$.
The Hamiltonian  is
\begin{equation}
\label{eq:HamDef}
	H^{\mathrm{FM}}(T_L)\, =\, \sum_{\alpha=1}^{L} h_{\alpha,[\alpha+1]}^{\mathrm{FM}}\, ,\
\text{ where }\
	h_{\alpha,[\alpha+1]}^{\mathrm{FM}}\, =\, -S^{(1)}_\alpha S^{(1)}_{[\alpha+1]} 
	- S^{(2)}_\alpha S^{(2)}_{[\alpha+1]} - S^{(3)}_\alpha S^{(3)}_{[\alpha+1]} + \frac{1}{4}\, \mathbbm{1}\, .
\end{equation}
We have shifted the energies so that the absolute minimum energy is $0$.

We
let $\mathbbm{1}$ denote the identity on the Hilbert space.
The spin operators are the usual variations of the Pauli spin-$1/2$ matrices.
Namely on $\C^2$ we have the orthonormal basis $\Psi_{1/2,1/2}$ and $\Psi_{1/2,-1/2}$,
and in this basis we have the matrices for the spin operators
\begin{equation*}
	S^{(1)}\, =\, \begin{bmatrix} 0 & 1/2 \\ 1/2 & 0 \end{bmatrix}\, ,\qquad
	S^{(2)}\, =\, \begin{bmatrix} 0 & -i/2 \\ i/2 & 0 \end{bmatrix}\ \text{ and }\ 
	S^{(3)}\, =\, \begin{bmatrix} 1/2 & 0 \\ 0 & -1/2 \end{bmatrix}\, .
\end{equation*}
The spin-raising and lower operators of a single site are also
\begin{equation*}
	S^{+}\, =\, \begin{bmatrix} 0 & 1 \\ 0 & 0 \end{bmatrix}\ \text{ and }\ 
	S^{-}\, =\, \begin{bmatrix} 0 & 0 \\ 1 & 0 \end{bmatrix}\, .
\end{equation*}
If we denote for each $\alpha \in \{1,\dots,L\}$ the single site Hilbert space to be $\Hil_{\alpha}
\cong \C^{2}$, then the total Hilbert space for all $L$ spin sites is the tensor product
\begin{equation*}
	\Hil_{\mathrm{tot}}\, =\, \Hil_1 \otimes \Hil_2 \otimes \cdots \otimes \Hil_L\, .
\end{equation*}
Then we also denote the localization of the spin matrices
\begin{equation*}
\begin{split}
	S_{\alpha}^{(\nu)}\, &=\, \mathbbm{1}_{\C^{2j+1}} \otimes \cdots \otimes \mathbbm{1}_{\C^{2j+1}} 
\otimes S^{(\nu)} \otimes \mathbbm{1}_{\C^{2j+1}}  \otimes \cdots \otimes \mathbbm{1}_{\C^{2j+1}}\, ,\\[-15pt]
&\hspace{0.75cm} \begin{tikzpicture} \draw (0,0) node[rotate=90] {\Huge $\Bigg\{$}; \end{tikzpicture}
\hspace{1.75cm} \begin{tikzpicture} \draw (0,0) node[rotate=90] {\Huge $\Bigg\{$}; \end{tikzpicture}\\[-30pt]
&\hspace{1.65cm} \alpha-1
\hspace{3.8cm} L-\alpha
\end{split}
\end{equation*}
for $\nu \in \{1,2,3\}$,
where all factors are identities on $\C^{2}$, namely $\mathbbm{1}_{\C^{2}}$, except for the factor at position $\alpha$ which is $S^{(\nu)}$.
We also define $S^{\pm}_{\alpha} = S^{(1)}_{\alpha} \pm i S^{(2)}_\alpha$.
Another form of the Hamiltonian, equivalent to (\ref{eq:HamDef})
is the formula 
\begin{equation*}
	H^{\mathrm{FM}}(T_L)\, =\, \sum_{\alpha=1}^{L} \bigg(\frac{1}{4}\, \mathbbm{1}
-S^{(3)}_\alpha S^{(3)}_{[\alpha+1]}
-\frac{1}{2}\, S^{+}_\alpha S^{-}_{[\alpha+1]} 
-\frac{1}{2}\, S^{-}_\alpha S^{+}_{[\alpha+1]} \bigg)\, .
\end{equation*}
On the tensor product Hilbert space $\Hil_{\mathrm{tot}}^{(j)}$, 
we write the total spin operators and the Casimir operator (total spin operator):
\begin{equation*}
	S^{(\nu)}_{\mathrm{tot}}\, =\, \sum_{\alpha=1}^{L} S_\alpha^{(\nu)}\, ,\  \text{ for $\nu \in \{1,2,3\}$, and }\ 
	\mathcal{C}_{\mathrm{tot}}\, =\, 
\left(S^{(1)}_{\mathrm{tot}}\right)^2
+ \left(S^{(2)}_{\mathrm{tot}}\right)^2
+ \left(S^{(3)}_{\mathrm{tot}}\right)^2\, .
\end{equation*}
Each of the total spin operators commutes with each nearest-neighbor interaction
term $h_{\alpha,[\alpha+1]}^{\mathrm{FM}}$.
So the whole Hamiltonian commutes with $S^{(\nu)}_{\mathrm{tot}}$,
for $\nu \in \{1,2,3\}$.
Similarly, $H^{\mathrm{FM}}(T_L)$ commutes with the Casimir operator
 $\mathcal{C}_{\mathrm{tot}}$.

Henceforth, let us assume that $L$ is even.
Then we may define, for each choice of
\begin{equation*}
S \in \{0,1,2,\dots,L/2\}\, ,
\end{equation*}
the Hilbert subspace $\Hil_{\mathrm{tot}}(S)$ defined as 
\begin{equation*}
	\Hil_{\mathrm{tot}}(S)\, =\, \big\{\psi \in \Hil_{[1,L]}\, :\, \mathcal{C}_{\mathrm{tot}} \psi = S(S+1) \psi\big\}\, .
\end{equation*}
Then we define the energy eigenvalue $E_{\min}^{\mathrm{FM}}(S;T_L)$ to be
\begin{equation}
\label{eq:EminDef}
	E_{\min}^{\mathrm{FM}}(S;T_L)\, =\, \min\left(\left\{\frac{\langle \psi\, ,\ H^{\mathrm{FM}}(T_L)\, \psi \rangle}{\|\psi\|^2}\, :\, 
	\psi \in \Hil_{\mathrm{tot}}(S)\, ,\ \|\psi\|\neq 0\right\}\right)\, .
\end{equation}

\section{Main results}
We applied exact diagonalization (Lanczos iteration) to numerically determine these numbers.
\begin{figure}
{\footnotesize
$$
\begin{tikzpicture}[xscale=0.75,yscale=0.85]
\draw (0,0.75) node[below] {\begin{minipage}{1cm} Spin\\ $S$\end{minipage}}; 
\draw (2,0) node[above] {$L=4$}; 
\draw (2,-0.75) node[] {1.00000}; 
\draw (2,-1.5) node[] {1.00000}; 
\draw (2,-2.25) node[] {0.00000}; 
\begin{scope}[xshift=2cm]
\draw (2,0) node[above] {$L=6$}; 
\draw (2,-0.75) node[] {0.69722}; 
\draw (2,-1.5) node[] {0.71922}; 
\draw (2,-2.25) node[] {0.50000}; 
\draw (2,-3) node[] {0.00000}; 
\begin{scope}[xshift=2cm]
\draw (2,0) node[above] {$L=8$}; 
\draw (2,-0.75) node[] {0.53949}; 
\draw (2,-1.5) node[] {0.55427}; 
\draw (2,-2.25) node[] {0.47431}; 
\draw (2,-3) node[] {0.29289}; 
\draw (2,-3.75) node[] {0.00000}; 
\begin{scope}[xshift=2cm]
\draw (2,0) node[above] {$L=10$}; 
\draw (2,-0.75) node[] {0.44108}; 
\draw (2,-1.5) node[] {0.45055}; 
\draw (2,-2.25) node[] {0.41331}; 
\draw (2,-3) node[] {0.32767}; 
\draw (2,-3.75) node[] {0.19098}; 
\draw (2,-4.5) node[] {0.00000}; 
\begin{scope}[xshift=2cm]
\draw (2,0) node[above] {$L=12$}; 
\draw (2,-0.75) node[] {0.37343}; 
\draw (2,-1.5) node[] {0.37970}; 
\draw (2,-2.25) node[] {0.35961}; 
\draw (2,-3) node[] {0.31258}; 
\draw (2,-3.75) node[] {0.23774}; 
\draw (2,-4.5) node[] {0.13397}; 
\draw (2,-5.25) node[] {0.00000}; 
\begin{scope}[xshift=2cm]
\draw (2,0) node[above] {$L=14$}; 
\draw (2,-0.75) node[] {0.32393}; 
\draw (2,-1.5) node[] {0.32827}; 
\draw (2,-2.25) node[] {0.31628}; 
\draw (2,-3) node[] {0.28776}; 
\draw (2,-3.75) node[] {0.24236}; 
\draw (2,-4.5) node[] {0.17962}; 
\draw (2,-5.25) node[] {0.09903}; 
\draw (2,-6) node[] {0.00000}; 
\begin{scope}[xshift=2cm]
\draw (2,0) node[above] {$L=16$}; 
\draw (2,-0.75) node[] {0.28610}; 
\draw (2,-1.5) node[] {0.28921}; 
\draw (2,-2.25) node[] {0.28153}; 
\draw (2,-3) node[] {0.26296}; 
\draw (2,-3.75) node[] {0.23336}; 
\draw (2,-4.5) node[] {0.19252}; 
\draw (2,-5.25) node[] {0.14020}; 
\draw (2,-6) node[] {0.07612}; 
\draw (2,-6.75) node[] {0.00000}; 
\begin{scope}[xshift=2cm]
\draw (2,0) node[above] {$L=18$}; 
\draw (2,-0.75) node[] {0.25623}; 
\draw (2,-1.5) node[] {0.25852}; 
\draw (2,-2.25) node[] {0.25332}; 
\draw (2,-3) node[] {0.24058}; 
\draw (2,-3.75) node[] {0.22022}; 
\draw (2,-4.5) node[] {0.19214}; 
\draw (2,-5.25) node[] {0.15622}; 
\draw (2,-6) node[] {0.11233}; 
\draw (2,-6.75) node[] {0.06031}; 
\draw (2,-7.5) node[] {0.00000}; 
\begin{scope}[xshift=2cm]
\draw (2,0) node[above] {$L=20$}; 
\draw (2,-0.75) node[] {0.23203}; 
\draw (2,-1.5) node[] {0.23377}; 
\draw (2,-2.25) node[] {0.23010}; 
\draw (2,-3) node[] {0.22098}; 
\draw (2,-3.75) node[] {0.20638}; 
\draw (2,-4.5) node[] {0.18625}; 
\draw (2,-5.25) node[] {0.16052}; 
\draw (2,-6) node[] {0.12911}; 
\draw (2,-6.75) node[] {0.09195}; 
\draw (2,-7.5) node[] {0.04894}; 
\draw (2,-8.25) node[] {0.00000}; 
\end{scope}
\end{scope}
\end{scope}
\end{scope}
\end{scope}
\end{scope}
\end{scope}
\end{scope}
\draw (1,0) -- (0,1);
\draw (1,0) -- (19,0);
\draw (1,0) -- (1,-8.5);
\draw (9.5,1) node[] {\small $E_{\min}^{\textrm{FM}}(S;T_L)$};
\draw (0.75,-0.75) node[left] {$0$};
\draw (0.75,-1.5) node[left] {$1$};
\draw (0.75,-2.25) node[left] {$2$};
\draw (0.75,-3) node[left] {$3$};
\draw (0.75,-3.75) node[left] {$4$};
\draw (0.75,-4.5) node[left] {$5$};
\draw (0.75,-5.25) node[left] {$6$};
\draw (0.75,-6) node[left] {$7$};
\draw (0.75,-6.75) node[left] {$8$};
\draw (0.75,-7.5) node[left] {$9$};
\draw (0.75,-8.25) node[left] {$10$};
\end{tikzpicture}
$$
\caption{We have printed  the values of the lowest energy of $H^{\mathrm{FM}}(T_L)$ for even values of $L$. 
\label{fig:GSEtable}
}
}
\end{figure}
In Figure \ref{fig:GSEtable} we report the numerical values we find for $E^{\mathrm{FM}}_{\min}(S;T_L)$ 
for all choices of $S$ and even $L$ up to $20$.
By the Rayleigh-Ritz variational formula, that is the smallest eigenvalue of $H^{\mathrm{FM}}(T_L)$
when restricted to the invariant subspace $\Hil_{\mathrm{tot}}(S)$.

\begin{conj} 
For even length $L\geq 4$, we have $E^{\mathrm{FM}}_{\min}(0;T_L)\leq E^{\mathrm{FM}}_{\min}(1;T_L)$ with strict inequality if $L>4$.
\end{conj}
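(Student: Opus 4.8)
The plan is to recast the claim as a statement about the \emph{top} of the antiferromagnetic spectrum, where the cyclic geometry can be exploited directly. Since
$H^{\mathrm{FM}}(T_L)=\tfrac{L}{4}\,\mathbbm{1}-J$ with $J=\sum_{\alpha=1}^{L}\sum_{\nu=1}^{3}S^{(\nu)}_\alpha S^{(\nu)}_{[\alpha+1]}$, writing $J_{\max}(S)$ for the largest eigenvalue of $J$ on $\Hil_{\mathrm{tot}}(S)$ we have $E^{\mathrm{FM}}_{\min}(S)=\tfrac{L}{4}-J_{\max}(S)$, so the claim is exactly equivalent to $J_{\max}(0)\ge J_{\max}(1)$, with strict inequality for $L>4$. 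In words, I must show that the highest antiferromagnetic energy in the singlet sector sits at or above the highest antiferromagnetic energy in the triplet sector. This is an ``anti-Lieb--Mattis'' ordering at the spectral edge opposite to the one FOEL predicts, and it is precisely here --- at the top of $J$, on an even ring --- that the monotonicity in $S$ can reverse. Note that the single-mode approximation only furnishes an \emph{upper} bound on $E^{\mathrm{FM}}_{\min}(1)$ (apply a spin-$1$ vector operator to the singlet ground state and invoke Rayleigh--Ritz), which is the wrong direction for the desired inequality; the argument must instead compare the two energies directly. The boundary case $L=4$ is the equality recorded in Figure~\ref{fig:GSEtable} and is checked directly, so the goal is strict inequality for even $L\ge 6$.

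Both extremal states live in the half-filled sector $S^{(3)}_{\mathrm{tot}}=0$ (that is, $L/2$ down spins), so the second step is to pass to the Bethe ansatz in that sector and identify the two competitors. The lowest $S=0$ state of $H^{\mathrm{FM}}(T_L)$ is, by the uniqueness theorem established earlier in the paper, a single distinguished Bethe configuration; following Dhar and Shastry I expect it to be the ``macroscopic string'' state built from $L/2$ rapidities, and the uniqueness result is exactly what is needed to exclude competing singlet configurations and pin this one down unambiguously. The lowest $S=1$ state is a highest-weight state with one fewer finite rapidity, namely $M=L/2-1$, its $S^{(3)}_{\mathrm{tot}}=0$ member being the $S^{-}_{\mathrm{tot}}$-descendant. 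Thus the comparison of $J_{\max}(0)$ with $J_{\max}(1)$ reduces to comparing two solutions of the Bethe equations that differ by the deletion of a single root, together with the re-quantization forced by the cyclic closure on an even ring.

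The third step is the genuine comparison: I would set up the root (string) densities for the two configurations, show that removing one root and re-solving shifts the remaining rapidities in a controlled, monotone fashion, and track the resulting sign of $J_{\max}(0)-J_{\max}(1)$. The even-ring boundary condition enters here through the quantization of the rapidities; it is this closure term that distinguishes the ring from the open chain and drives the reversal. I expect the main obstacle to be exactly this last step: making the Bethe-ansatz comparison rigorous for all finite even $L\ge 6$ rather than merely to leading order in $1/L$. The familiar difficulties at the spectral edge --- completeness of the Bethe states, proving that the conjectured string content really is the maximizer, and controlling the \emph{sign} (not just the leading asymptotics) of the energy difference --- are severe, which is why Sutherland's Bethe-ansatz analysis and, crucially, the post-Bethe-ansatz refinements of Dhar and Shastry are the essential inputs. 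As a complementary and fully rigorous check, an explicit singlet trial vector bounds $E^{\mathrm{FM}}_{\min}(0)$ from above by Rayleigh--Ritz; but because the two target energies are numerically very close, separate one-sided bounds cannot resolve the ordering, and only the direct difference estimate from the root analysis can establish the claim.
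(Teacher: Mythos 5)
There is a genuine gap, and it begins with a category error: the statement you set out to prove is a \emph{conjecture} in the paper. The paper offers no proof of it --- its evidence is exact diagonalization for $L=4,6,\dots,20$ (Figure \ref{fig:GSEtable}), together with a heuristic explanation assembled from the non-degeneracy result (Lemma \ref{lem:first}), the positivity of the SMA trial energies (Corollary \ref{cor:SMA}), and the Bethe-ansatz pictures of Sutherland and of Dhar and Shastry. Your proposal does not close that gap either. Step 1 is an elementary sign flip ($E^{\mathrm{FM}}_{\min}(S;T_L)=\frac{L}{4}-J_{\max}(S)$), step 2 rests on identifications that are themselves unproven, and step 3 --- the only place where the inequality $J_{\max}(0)\geq J_{\max}(1)$ could actually be established --- is explicitly deferred as ``the main obstacle.'' Saying that one should ``set up the root densities, show the shift is monotone, and track the sign'' restates the problem; no estimate of the sign of $J_{\max}(0)-J_{\max}(1)$ is produced for any finite $L$. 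So the proposal is a research plan whose unexecuted core step is exactly the open problem.

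The structural claims in step 2 also conflict with what the paper establishes and reviews. By Lemma \ref{lem:first} the singlet ground state is unique, has momentum $\pi$, and is a reflection eigenvector; but in the Dhar--Shastry picture there are \emph{two} Bloch-wall configurations at $S=0$, namely $\Xi_L^{\mathrm{Bloch}}(0)$ and $\mathcal{R}\Xi_L^{\mathrm{Bloch}}(0)$, and neither separately has these symmetries. So uniqueness does not ``pin down'' a single distinguished Bethe configuration for the singlet ground state; rather, it shows that the one-configuration picture fails precisely at $S=0$, which is the whole point of the paper's explanation. Likewise, your identification of the lowest $S=1$ state with the $M=L/2-1$ Bloch-wall/Bethe state is exactly the Dhar--Shastry claim that the paper lists as an open problem to prove. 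Finally, the paper's reading of the spectra (Figure \ref{fig:full}) is that the turnaround at $S=0$ comes from a near-degeneracy of at least \emph{three} vectors --- the two Bloch walls plus Sutherland's $\theta=\pi$ branch --- i.e., from a level-splitting effect, not from a monotone drift of rapidities upon deleting one root; so even the qualitative mechanism envisioned in your step 3 is likely the wrong one. The one point you get clearly right, and which agrees with the paper, is that the SMA only bounds $E^{\mathrm{FM}}_{\min}(1;T_L)-E^{\mathrm{FM}}_{\min}(0;T_L)$ from above and therefore can motivate, but never prove, the conjectured ordering.
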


The evidence for this is that numerically it is verified for $L=4,6,8,\dots,20$, as is seen by consulting
the first two rows of the table in Figure \ref{fig:GSEtable}.

Let us 
state a result known as ``ferromagnetic ordering of energy levels,''
for spin chains with open boundary conditions by Nachtergaele, Spitzer and one of the authors.
Let the open chain graph be $C_L$: vertex set  $\{1,\dots,L\}$ and edge
set those pairs $(\alpha,\alpha+1)$ for $1\leq \alpha<L$. But $L$ does not share an edge with $1$.
(So no periodic boundary condition.)
Then the Hamiltonian on this graph is 
$$
	H^{\mathrm{FM}}(C_L)\, =\, \sum_{\alpha=1}^{L-1} h_{\alpha,\alpha+1}^{\mathrm{FM}}\, .
$$

\begin{prop}[Nachtergaele, Spitzer, S \cite{NSSfoel}]
Defining the energies of the open chain Hamiltonian
$$
	E_{\min}^{\mathrm{FM}}(S;C_L)\, =\, \min\left(\left\{\frac{\langle \psi\, ,\ H^{\mathrm{FM}}(C_L)\, \psi \rangle}{\|\psi\|^2}\, :\, 
	\psi \in \Hil_{\mathrm{tot}}(S)\, ,\ \|\psi\|\neq 0\right\}\right)\, ,
$$
one has $E_{\min}^{\mathrm{FM}}(S;C_L) < E_{\min}^{\mathrm{FM}}(S+1;C_L)$
for each $S=0,1,\dots,(L/2)-1$.
\end{prop}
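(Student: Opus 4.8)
The plan is to use the full $SU(2)$ symmetry to rewrite each $E_{\min}^{\mathrm{FM}}(S;C_L)$ as the bottom of the spectrum of $H^{\mathrm{FM}}(C_L)$ on the space of weight-$S$ highest-weight vectors, and then to force the strict separation of consecutive sectors by a Perron--Frobenius positivity argument together with an energy-monotone map relating those spaces.

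First I would record the reformulation $h^{\mathrm{FM}}_{\alpha,\alpha+1} = \tfrac12(\mathbbm{1} - P_{\alpha,\alpha+1})$, where $P_{\alpha,\alpha+1}$ is the transposition of sites $\alpha$ and $\alpha+1$; this is immediate from $\vec S_\alpha\cdot\vec S_{\alpha+1} = \tfrac12 P_{\alpha,\alpha+1} - \tfrac14$. Thus $2H^{\mathrm{FM}}(C_L) = \sum_{\alpha}(\mathbbm{1} - P_{\alpha,\alpha+1})$ is the negative generator of the interchange (random-transposition) process on the path $C_L$, and the very first instance of the asserted ordering — the identification of the gap above the fully aligned state with the gap of the one-particle walk — is exactly the Aldous spectral-gap identity established by Caputo--Liggett--Richthammer. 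I would take that as the base of an induction and seek the remaining sectors from the same positivity circle of ideas.

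Next, since $H^{\mathrm{FM}}(C_L)$ commutes with $S^{(3)}_{\mathrm{tot}}$ and with $S^{\pm}_{\mathrm{tot}}$, I would work in the magnetization sector $\mathcal M_S = \{\psi : S^{(3)}_{\mathrm{tot}}\psi = S\,\psi\}$ and exploit the orthogonal, $H$-invariant splitting $\mathcal M_S = \mathcal V_S \oplus S^{-}_{\mathrm{tot}}\,\mathcal M_{S+1}$, where $\mathcal V_S = \mathcal M_S \cap \ker S^{+}_{\mathrm{tot}}$ is the space of weight-$S$ highest-weight vectors (orthogonality holds because $(S^{+}_{\mathrm{tot}})^{*} = S^{-}_{\mathrm{tot}}$). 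As $S^{-}_{\mathrm{tot}}$ is injective on $\mathcal M_{S+1}$ and intertwines $H$, the second summand merely reproduces $\operatorname{spec}\big(H|_{\mathcal M_{S+1}}\big)$, so that $E_{\min}^{\mathrm{FM}}(S;C_L) = \min\operatorname{spec}\big(H^{\mathrm{FM}}(C_L)\big|_{\mathcal V_S}\big)$. This reduces the Proposition to the strict comparison of the bottom eigenvalues of $H$ on the consecutive highest-weight spaces $\mathcal V_S$ and $\mathcal V_{S+1}$.

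For the strictness I would invoke Perron--Frobenius in the Ising (tensor-product) basis: within any magnetization sector the off-diagonal matrix elements of $-H^{\mathrm{FM}}(C_L)$ are the nonnegative hopping coefficients of $\tfrac12 S^{+}_\alpha S^{-}_{\alpha+1} + \text{h.c.}$, and connectedness of $C_L$ makes the restriction irreducible, so each sector has a unique, entrywise strictly positive ground state. The crux — and the step I expect to be the main obstacle — is to build an explicit energy-monotone intertwiner between $\mathcal V_S$ and $\mathcal V_{S+1}$ (morally, adjoining one flipped spin and projecting back onto the highest-weight space) and to show, using the strict positivity just established and the irreducibility furnished by the \emph{open} boundary, that it produces the strict separation asserted in the Proposition rather than a mere weak inequality. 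Equivalently, one must establish a strict operator inequality between the two restricted Hamiltonians, or a strict Perron--Frobenius comparison of the associated nonnegative matrices. It is precisely the absence of the periodic bond that removes the competing configurations responsible for the near-degeneracies seen on the ring $T_L$, and thereby upgrades weak monotonicity to the strict inequality claimed here.
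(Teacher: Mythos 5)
Your reduction to highest-weight spaces is correct ($E^{\mathrm{FM}}_{\min}(S;C_L)$ is the bottom of $H^{\mathrm{FM}}(C_L)$ restricted to $\mathcal{V}_S=\mathcal{M}_S\cap\ker S^{+}_{\mathrm{tot}}$), but the proof has a genuine gap, and it sits exactly where you predict "the main obstacle." The deeper problem is that your Perron--Frobenius step is run in the wrong basis. Ising-basis positivity within a magnetization sector $\mathcal{M}_M$ does give a unique entrywise-positive ground state there, but for the \emph{ferromagnet} that vector is $(S^{-}_{\mathrm{tot}})^{(L/2)-M}\,(\Psi_{1/2,1/2}\otimes\cdots\otimes\Psi_{1/2,1/2})$: it has energy $0$ and total spin $L/2$. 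So this argument only re-identifies the fully symmetric multiplet and carries no information about the bottom of $H$ on $\mathcal{V}_S$ for $S<L/2$; worse, the spaces $\mathcal{V}_S$ are not spanned by subsets of Ising configurations, so "the restriction of $-H$ to $\mathcal{V}_S$ has nonnegative off-diagonal entries and is irreducible" is not even a meaningful statement in that basis. The idea your plan is missing --- and the one the cited proof in \cite{NSSfoel} (sketched in Appendix C of this paper) is built on --- is to give each $\mathcal{V}_S$ its own distinguished basis: the Hulth\'en bracket (Temperley--Lieb) basis of non-crossing singlet pairings of $(L/2)-S$ pairs of sites with the remaining $2S$ spins up. In \emph{that} basis $-H^{\mathrm{FM}}(C_L)$ has nonnegative off-diagonal entries and is irreducible, so Perron--Frobenius yields a positive, nondegenerate ground state of each spin sector; this is also precisely where the open boundary enters, since a bond between sites $L$ and $1$ would create crossings and destroy the sign structure (which is why the ring is different), not because it "removes competing configurations responsible for near-degeneracies."

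Second, and independently: the step you yourself call the crux --- an energy-monotone intertwiner between consecutive sectors upgrading uniqueness-per-sector to the strict inter-sector inequality --- is never constructed; you only assert that it must exist. Since sectorwise Perron--Frobenius alone gives no ordering at all, the proposal as written establishes nothing beyond the reduction in your second paragraph; in \cite{NSSfoel} this comparison is the actual content of the theorem, carried out Lieb--Mattis-style \cite{LiebMattisOEL} using positivity of both ground states in compatible bracket bases together with an induction on the length of the chain. Two smaller points. You never commit to a direction of the inequality; for the ferromagnet the true statement is that $S\mapsto E^{\mathrm{FM}}_{\min}(S;C_L)$ is strictly \emph{decreasing} (higher spin, lower energy), and indeed the inequality displayed in the Proposition is reversed --- evidently a typo in the paper, since $E^{\mathrm{FM}}_{\min}(L/2;C_L)=0$ is the minimum, as Figure 1 shows --- so a careful proof must fix the orientation at the outset. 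Finally, the appeal to Caputo--Liggett--Richthammer \cite{CLR} as a "base case" is both anachronistic (the result of \cite{NSSfoel} predates it and its proof does not pass through Aldous' conjecture; on trees the gap identity was already known from \cite{HandjaniJungreis}) and insufficient as stated, since Aldous' identity compares $E^{\mathrm{FM}}_{\min}(L/2-1;C_L)$ with all lower-spin sectors at once and does not seed any induction over consecutive pairs.
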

This result was called ``ferromagnetic ordering of energy levels.'' 
This is analogous to the celebrated ``ordering of energy levels'' result of Lieb and Mattis
which is mainly applicable to antiferromagnets \cite{LiebMattisOEL}.
Based on the result, the authors stated a family of inequalities which may be satisfied
by the Heisenberg model on certain graphs $\mathcal{G} = (\mathcal{V},\mathcal{E})$.
For 
$$
H^{\mathrm{FM}}(\mathcal{G})\, =\, \sum_{\{\alpha,\beta\} \in \mathcal{E}} h^{\mathrm{FM}}_{\alpha,\beta}\, ,
$$
and 
$$
	E_{\min}^{\mathrm{FM}}(S;\mathcal{G})\, =\, \min\left(\left\{\frac{\langle \psi\, ,\ H^{\mathrm{FM}}(\mathcal{G})\, \psi \rangle}{\|\psi\|^2}\, :\, 
	\psi \in \Hil_{\mathrm{tot}}(S)\, ,\ \|\psi\|\neq 0\right\}\right)\, ,
$$
the FOEL inequality at level $n$ is the inequality
$$
	E_{\min}^{\mathrm{FM}}\left(\frac{1}{2}|\mathcal{V}|-n;\mathcal{G}\right)\, <\, 
	E_{\min}^{\mathrm{FM}}\left(\frac{1}{2}|\mathcal{V}|-n+1;\mathcal{G}\right)\, .
$$
So the FOEL-$n$ inequality is satisfied for all $n=1,\dots,(L/2)$ for the case of the graph $\mathcal{G}$
being an open chain $C_L$.
But our conjecture is that it is violated for the case of the spin ring $T_L$ for even $L\geq 4$
for the case of the FOEL-$n$ inequality with $n=L/2$.
This violation was previously noted by Spitzer, Tran and an author in \cite{SpitzerStarrTran}.
But there the evidence was more limited. Also, here we believe that there is further evidence
that the violation of FOEL is limited to the spin singlet.

We will state a bit more about the history of the FOEL conjecture at the end of this section, since 
there is an important reference that should be described.
But first, let us state one rigorous result and theoretical explanation for the reason that FOEL is violated
in the spin ring at the spin singlet.

\subsection{Non-degeneracy of the spin singlet ground state}

The Hulth\'en bracket basis, as described by Temperley and Lieb in \cite{TemperleyLieb} is as follows. 
Suppose that $\Big((\alpha_1,\beta_1),\dots,(\alpha_{L/2},\beta_{L/2})\Big)$
are a family of left and right endpoints of a perfect matching of $\{1,\dots,L\}$ simply meaning 
$\{\alpha_1,\dots,\alpha_{L/2},\beta_1,\dots,\beta_{L/2}\} = \{1,\dots,L\}$.
For definiteness let us assume $\alpha_1<\dots<\alpha_{L/2}$.
Then we also assume
\begin{itemize}
\item $\alpha_k<\beta_k$ for each $k$,
\end{itemize}
and we assume the non-crossing condition
\begin{itemize}
\item for $j$ and $k$ distinct: $\alpha_k<\alpha_j<\beta_k$ if and only if $\alpha_k<\beta_j<\beta_k$.
\end{itemize}
Let us abbreviate this as $(\boldsymbol{\alpha},\boldsymbol{\beta})$.
Let $\mathcal{M}_L$ be the set of all such non-crossing perfect matchings of $\{1,\dots,L\}$.
We define
\begin{equation*}
\begin{split}
\Phi_L(\boldsymbol{\alpha},\boldsymbol{\beta})\, 
&=\, \left(\frac{1}{\sqrt{2}}\, S_{\alpha_1}^- - \frac{1}{\sqrt{2}}\, S_{\beta_1}^{-}\right)
\cdots \left(\frac{1}{\sqrt{2}}\, S_{\alpha_{L/2}}^- - \frac{1}{\sqrt{2}}\, S_{\beta_{L/2}}^{-}\right)
\, \cdot \Big(\Psi_{1/2,1/2} \otimes \cdots \otimes \Psi_{1/2,1/2}\Big)\, .\\[-10pt]
&\hspace{9.25cm} \begin{tikzpicture} \draw (0,0) node[rotate=90] {\Huge $\Bigg\{$}; \end{tikzpicture}\\[-25pt]
&\hspace{10.55cm} L
\end{split}
\end{equation*}
In other words this is simply a product of $L/2$ spin singlets, such that the $k$-th
spin singlet is shared between spin sites $\alpha_k$ and $\beta_k$.

\begin{lemma}[Perron-Frobenius result]
\label{lem:first}
For even $L$ and $S=0$, there is a vector $\Theta_{L}$ which is positive in the Temperley-Lieb (Hulthen bracket) basis such that 
$$
	H_{L}^{\mathrm{FM}} \Theta_L\, =\, E_{\min}^{\mathrm{FM}}(0;L) \Theta_L\, .
$$
Moreover, this eigenvector is non-degenerate:
$$
	\left\{\psi \in \Hil_{[1,L]}^{(0)}\, :\,  H_{L}^{\mathrm{FM}} \psi = E_{\min}^{\mathrm{FM}}(0;L) \psi\right\}\,
	=\, \operatorname{span}(\Theta_L)\, .
$$
\end{lemma}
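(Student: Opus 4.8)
The plan is to reformulate $H^{\mathrm{FM}}(T_L)$ restricted to $\Hil_{\mathrm{tot}}(0)$ as a non-negative combination of Temperley--Lieb generators and then apply the Perron--Frobenius theorem in the Hulth\'en bracket basis. First I would record the two-site identity $h^{\mathrm{FM}}_{\alpha,[\alpha+1]} = P^{\mathrm{s}}_{\alpha,[\alpha+1]}$, the orthogonal projection onto the singlet of the pair $\{\alpha,[\alpha+1]\}$ (since $\tfrac14\mathbbm1 - S^{(1)}_\alpha S^{(1)}_{[\alpha+1]} - S^{(2)}_\alpha S^{(2)}_{[\alpha+1]} - S^{(3)}_\alpha S^{(3)}_{[\alpha+1]}$ has eigenvalue $1$ on the singlet and $0$ on the triplet), so that with $e_\alpha := 2P^{\mathrm{s}}_{\alpha,[\alpha+1]}$ one has $H^{\mathrm{FM}}(T_L) = \tfrac12\sum_{\alpha=1}^L e_\alpha$. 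I would then recall the classical fact (Temperley--Lieb, Rumer) that the family $\{\Phi_L(\boldsymbol{\alpha},\boldsymbol{\beta}):(\boldsymbol{\alpha},\boldsymbol{\beta})\in\mathcal{M}_L\}$ is a basis of $\Hil_{\mathrm{tot}}(0)$, its cardinality being the Catalan number $C_{L/2} = \dim\Hil_{\mathrm{tot}}(0)$.

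The heart of the matter is the combinatorial action of the generators on this basis. For each matching $m$ and each edge $\{\alpha,[\alpha+1]\}$ there are two cases: if $\alpha$ and $[\alpha+1]$ are already paired in $m$, then $e_\alpha\Phi_L(m) = 2\,\Phi_L(m)$ (a positive diagonal contribution equal to the loop weight $\delta=2$); otherwise $e_\alpha$ \emph{reconnects}, replacing the two bonds meeting $\alpha$ and $[\alpha+1]$ by the bond $(\alpha,[\alpha+1])$ together with the bond joining their former partners, and $e_\alpha\Phi_L(m) = c\,\Phi_L(m')$ for a single other matching $m'$. A point I would establish is that $m'$ is again non-crossing, even for the wrap-around term $e_L$: because $(1,L)$ is the outer ``rainbow'' arc, the reconnected diagram never needs to be re-expanded, so every generator maps basis vectors to scalar multiples of single basis vectors. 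I would then show the reconnection scalar is always $c=-1$. This is the crucial sign computation; I have checked it directly for $L=4$, where it already yields $H^{\mathrm{FM}}(T_4)|_{S=0} = \bigl(\begin{smallmatrix}2&-1\\-1&2\end{smallmatrix}\bigr)$ with ground energy $1$ and eigenvector $\Phi_L(m_1)+\Phi_L(m_2)$, and I expect it to follow in general from the bipartite two-colouring of the even ring together with the small-to-large orientation of the singlets in $\Phi_L$; if the bare sign is not already uniform, a diagonal $\pm1$ gauge on the basis removes the inconsistency, the absence of frustration being guaranteed by bipartiteness.

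Granting the sign lemma, the matrix $A$ of $H^{\mathrm{FM}}(T_L)$ in the Hulth\'en basis has non-negative diagonal and non-positive off-diagonal entries, so $B := c\,\mathbbm1 - A$ is entrywise non-negative for $c$ large, has real spectrum (being similar to the self-adjoint $H^{\mathrm{FM}}(T_L)|_{\Hil_{\mathrm{tot}}(0)}$), and its spectral radius $c - E^{\mathrm{FM}}_{\min}(0;L)$ is the Perron eigenvalue. I would then prove $B$ is irreducible, i.e.\ that the directed reconnection graph on $\mathcal{M}_L$ is strongly connected: taking the all-nearest-neighbour matching $m_0=\{(1,2),(3,4),\dots,(L-1,L)\}$ as a hub, successive applications of the $e_\alpha$ build every non-crossing matching out of $m_0$ and conversely collapse every matching back to $m_0$, so any two matchings are joined by directed reconnection paths in both directions. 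The Perron--Frobenius theorem for non-negative irreducible matrices then gives a strictly positive right eigenvector $\vec{\psi}$ for the spectral radius, unique up to scale, which is exactly the assertion that $\Theta_L = \sum_m \psi_m\,\Phi_L(m)$ is positive in the Temperley--Lieb basis, satisfies $H^{\mathrm{FM}}_L\Theta_L = E^{\mathrm{FM}}_{\min}(0;L)\,\Theta_L$, and spans the ground-state eigenspace in $\Hil_{\mathrm{tot}}(0)$.

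The main obstacle is the uniform-sign lemma, and within it the periodic bond $e_L$: one must check that the small-to-large orientation convention, natural on the line but clashing with the cyclic order, still produces the same reconnection sign as the bulk bonds (equivalently, that the Marshall-type gauge extends across the seam without frustration). This is precisely where evenness of $L$, hence bipartiteness of $T_L$, is indispensable, and where the explicit structure uncovered by Sutherland and by Dhar and Shastry supplies the needed control.
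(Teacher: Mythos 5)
Your proposal follows essentially the same route as the paper's (sketched) proof: identify each bond term with the singlet projector so that $H^{\mathrm{FM}}(T_L)$ restricted to $\Hil_{\mathrm{tot}}(0)$ becomes a sum of Temperley--Lieb generators, observe that in the spin-singlet sector even the wrap-around generator $e_L$ sends a non-crossing matching to a single non-crossing matching (no through-strands, so no crossings to resolve), check that all off-diagonal entries then carry one uniform sign, and apply Perron--Frobenius to an irreducible non-negative matrix. The one place you hedge --- the uniform reconnection sign, in particular across the seam --- is exactly the crux the paper isolates, and your primary route (direct verification that the coefficient is always $-1$ for $e_\alpha=2P^{\mathrm{s}}$ in the small-to-large bracket convention) is the one that works; note that the fallback ``diagonal $\pm1$ gauge justified by bipartiteness'' is not actually licensed (frustration lives on the reconnection graph of $\mathcal{M}_L$, not on $T_L$) and, if a nontrivial gauge were needed, it would yield positivity only in the gauged basis rather than in the Hulth\'en bracket basis as the lemma asserts.
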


We will not give the full proof of the lemma here.
The proof is similar to the proof of the original ``ordering of energy levels'' result
of Lieb and Mattis \cite{LiebMattisOEL}, using the Perron-Frobenius theorem.
It is also similar to the result of Nachtergaele, Spitzer and an author, which used the Hulth\'en bracket basis.
But we do give a summary of the proof idea in Appendix \ref{app:TL}.
The full proof  is available on the arXiv in an earlier preprint version of this article.
All details are there.

The point of Lemma \ref{lem:first} is to demonstrate a symmetry property, namely non-degeneracy with respect to $H^{\mathrm{FM}}_{L}$.
That is not present in the other total spin sectors for two reasons.
Firstly, a higher spin multiplet is by definition  multiply-degenerate.
But more importantly, the lowest energy eigenvectors
may be chosen to be simultaneous eigenvectors of the translation operator,
$$
\mathcal{T} : \Hil_{[1,L]} \to \Hil_{[1,L]}
$$
defined by linearly extending the reflection operator
$$
	\mathcal{T} (\psi_1 \otimes \psi_2 \otimes \cdots \otimes \psi_{L-1} \otimes
\psi_L)\, =\, \psi_{2} \otimes \psi_{3} \otimes \cdots \otimes \psi_L \otimes \psi_1\, .
$$
In \cite{DharShastry},
Dhar and Shastry claim that such an eigenvector, with total spin $S$, may be chosen to satisfy $\mathcal{T} \psi = e^{i\phi} \psi$
with $\phi$ either equal to $\phi_+(S)$ or $\phi_{-}(S)$ where $\phi_{\pm}(S) = \pm \pi\cdot (1-(2S/L))$.

In the next section, we will give a more complete summary of their results.
For now, note that, unless $S=0$ so that $e^{i\phi_{\pm}(S)}=e^{\pm i\pi}=-1$, or $S=1$ so that $e^{i\phi_{\pm}(S)}=e^{0}=1$,
the eigenvectors Dhar and Shastry describe may be written as a pair $\psi_+(S)$, $\psi_-(S)$ such that
$\mathcal{T} \psi_{\pm}(S) = e^{i\phi_{\pm}(S)} \psi$.
Also we have  $\mathcal{R} \psi_{\pm} = \psi_{\mp}$ for the reflection operator
$$
\mathcal{R} : \Hil_{[1,L]} \to \Hil_{[1,L]}
$$
defined by linearly extending the reflection operator
$$
	\mathcal{R} (\psi_1 \otimes \psi_2 \otimes \cdots \otimes \psi_{L-1} \otimes
\psi_L)\, =\, \psi_{L} \otimes \psi_{L-1} \otimes \cdots \otimes \psi_2 \otimes \psi_1\, .
$$
We note that part of Dhar and Shastry's analysis involves stating that at $S=0$,
there is a unique ground state (of the Hamiltonian restricted to the spin $S$ subspace),
which is an eigenvector of $\mathcal{T}$ with eigenvalue $-1$ and which is an eigenvector of $\mathcal{R}$.
(The eigenvalue of $\mathcal{R}$ is $(-1)^{L/2}$.) These properties may be seen to be verified by Lemma \ref{lem:first},
using properties of the Hulth\'en bracket basis.
For now, let us note an implication.

\subsection{The single mode approximation}

Define the momentum-$\ell$ spin raising- and lowering-operators
\begin{equation*}
	\widehat{S}^{\pm}_\ell\, =\, \sum_{\alpha=1}^{L} e^{2 \pi i \ell \alpha/L} S_{\alpha}^{\pm}\, ,
\end{equation*}
for $\ell \in \Z$.
Then one version of the single mode approximation consists of starting with an energy eigenvector such
as $\Theta_L$ and considering trial wave vectors of the form
$$
	\widehat{S}^{\pm}_{\ell} \Theta_L\, ,
$$
for various choices of $\ell \in \{1,\dots,L-1\}$. (Note that $\ell=0$ gives the usual spin raising operator, which
is a symmetry of the Hamiltonian, thus excluded from consideration.)
See, for example, Chapter 9 of Auerbach \cite{Auerbach}.

\begin{cor}
\label{cor:SMA}[The SMA at the singlet]
Assuming that $L$ is even, define the SMA trial energy at momentum $\ell \in \{1,\dots,L-1\}$ to be
$$ 
\mathcal{E}^{\mathrm{SMA}}_{L}(\ell)\,
=\, 
\frac{\left\langle \Theta_L\, ,\ 
\widehat{S}^-_{-\ell} \left(H^{\mathrm{FM}}_{L}
-E^{\mathrm{FM}}_{\min}(0;L)\right) \widehat{S}^+_{\ell} \Theta_L\right\rangle}
{\left\|\widehat{S}^+_{\ell} \Theta_L\right\|^2}\, .
$$
Then
$$
\mathcal{E}_{L}^{\mathrm{SMA}}(\ell)\, =\, 
\frac{8}{3}\, 
\left(\frac{L}{4} - E^{\mathrm{FM}}_{\min}(0;L)\right)\,
\frac{\sin^2\left(\pi \ell/L\right)}{\left\|\widehat{S}^+_{\ell} \Psi^{\mathrm{FM}}_{\min}(0;L)\right\|^2}\, .
$$
\end{cor}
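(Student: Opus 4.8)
The plan is to read $\mathcal{E}^{\mathrm{SMA}}_L(\ell)$ as a Feynman--Bijl ratio and to evaluate its numerator through a double commutator. Write $E_0 = E^{\mathrm{FM}}_{\min}(0;L)$ and take $\Theta_L$ normalized, so that $\Psi^{\mathrm{FM}}_{\min}(0;L) = \Theta_L$ and the two denominators coincide. First observe that $\widehat{S}^-_{-\ell} = (\widehat{S}^+_\ell)^{\dagger}$ and, by Lemma \ref{lem:first}, $(H^{\mathrm{FM}}_L - E_0)\Theta_L = 0$. Since $E_0$ is a scalar, $(H^{\mathrm{FM}}_L - E_0)\widehat{S}^+_\ell\Theta_L = [H^{\mathrm{FM}}_L, \widehat{S}^+_\ell]\Theta_L$, so the numerator equals $N_+ := \langle\Theta_L, \widehat{S}^-_{-\ell}[H^{\mathrm{FM}}_L, \widehat{S}^+_\ell]\Theta_L\rangle$.

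The key reduction is to show that $N_+$ is one half of the double-commutator expectation $D := \langle\Theta_L, [\widehat{S}^-_{-\ell}, [H^{\mathrm{FM}}_L, \widehat{S}^+_\ell]]\Theta_L\rangle$. Expanding $D$ into its four products and using $H^{\mathrm{FM}}_L\Theta_L = E_0\Theta_L$ from both sides (legitimate since $H^{\mathrm{FM}}_L$ is self-adjoint), the terms proportional to $E_0$ cancel and we are left with $D = N_+ + N_-$, where $N_- := \langle\Theta_L, \widehat{S}^+_\ell(H^{\mathrm{FM}}_L - E_0)\widehat{S}^-_{-\ell}\Theta_L\rangle$. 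To prove $N_+ = N_-$ I would use two symmetries of $\Theta_L$. Because $\Theta_L$ is a total-spin singlet it is invariant, up to a phase, under the global spin flip $\mathcal{U} = \exp(-i\pi S^{(1)}_{\mathrm{tot}})$, which satisfies $\mathcal{U}H^{\mathrm{FM}}_L\mathcal{U}^{\dagger} = H^{\mathrm{FM}}_L$ and $\mathcal{U}\widehat{S}^{\pm}_\ell\mathcal{U}^{\dagger} = \widehat{S}^{\mp}_\ell$; conjugating $N_-$ by $\mathcal{U}$ gives $N_-(\ell) = N_+(-\ell)$. Next, the reflection symmetry $\mathcal{R}\Theta_L = \pm\Theta_L$ supplied by Lemma \ref{lem:first}, together with $\mathcal{R}\widehat{S}^+_\ell\mathcal{R}^{\dagger} = e^{2\pi i\ell/L}\widehat{S}^+_{-\ell}$ and the conjugate identity for $\widehat{S}^-_{-\ell}$ (whose accompanying phases cancel in the product), yields $N_+(\ell) = N_+(-\ell)$. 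Hence $N_+ = N_- = D/2$. I expect this symmetrization to be the main obstacle, since it is exactly the point at which the singlet-specific structure from Lemma \ref{lem:first} --- non-degeneracy, $SU(2)$ invariance, and the $\mathcal{R}$-eigenvector property --- is indispensable.

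What remains is the algebraic evaluation of $D$, essentially an f-sum rule. Using the spin commutation relations one computes, for each bond and with $\omega = e^{2\pi i\ell/L}$, the single commutator $[h^{\mathrm{FM}}_{\alpha,[\alpha+1]}, \widehat{S}^+_\ell] = e^{2\pi i\ell\alpha/L}(1-\omega)(S^{(3)}_\alpha S^+_{[\alpha+1]} - S^+_\alpha S^{(3)}_{[\alpha+1]})$, and then commuting with $\widehat{S}^-_{-\ell}$ produces the two-site operator $(1-\omega)(S^-_\alpha S^+_{[\alpha+1]} - \bar\omega\, S^+_\alpha S^-_{[\alpha+1]} + 2(1-\bar\omega)S^{(3)}_\alpha S^{(3)}_{[\alpha+1]})$. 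Taking the expectation in the singlet $\Theta_L$ and invoking $SU(2)$ invariance, which forces $\langle S^-_\alpha S^+_{[\alpha+1]}\rangle = \langle S^+_\alpha S^-_{[\alpha+1]}\rangle = 2\langle S^{(3)}_\alpha S^{(3)}_{[\alpha+1]}\rangle$ on every bond, all three terms collapse onto the longitudinal correlator, giving $D = 4(1-\omega)(1-\bar\omega)\Sigma$ with $\Sigma = \sum_\alpha\langle\Theta_L, S^{(3)}_\alpha S^{(3)}_{[\alpha+1]}\Theta_L\rangle$.

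Finally I would pin down $\Sigma$ from the ground-state energy: expanding $E_0 = \langle\Theta_L, H^{\mathrm{FM}}_L\Theta_L\rangle$ with the same $SU(2)$ identities gives $E_0 = \tfrac{L}{4} - 3\Sigma$, so $\Sigma = \tfrac13(\tfrac{L}{4} - E_0)$. Since $(1-\omega)(1-\bar\omega) = 2 - 2\cos(2\pi\ell/L) = 4\sin^2(\pi\ell/L)$, we get $D = \tfrac{16}{3}(\tfrac{L}{4} - E_0)\sin^2(\pi\ell/L)$ and therefore $N_+ = D/2 = \tfrac{8}{3}(\tfrac{L}{4} - E_0)\sin^2(\pi\ell/L)$. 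Dividing by $\|\widehat{S}^+_\ell\Theta_L\|^2 = \|\widehat{S}^+_\ell\Psi^{\mathrm{FM}}_{\min}(0;L)\|^2$ produces the claimed identity.
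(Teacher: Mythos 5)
Your proposal is correct and follows essentially the same route as the paper: it reduces the numerator to half of the double-commutator expectation $\tfrac12\langle \Theta_L,[\widehat{S}^-_{-\ell},[H^{\mathrm{FM}}_L,\widehat{S}^+_\ell]]\Theta_L\rangle$ by exploiting the non-degeneracy of the singlet ground state from Lemma \ref{lem:first} together with the spin-flip and reflection symmetries (the paper packages these as the single unitary $U=\mathcal{R}\mathcal{F}$ in Corollary \ref{cor:Pert}), and then evaluates the double commutator, which matches (\ref{eq:DoubleCommFinal}), via $\mathrm{SU}(2)$ invariance and the ground-state energy sum rule $E_0=\tfrac{L}{4}-3\Sigma$. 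Your explicit symmetrization argument showing $N_+(\ell)=N_+(-\ell)=N_-(\ell)$ and the f-sum-rule computation fill in details the paper defers to its arXiv preprint, but the underlying mechanism is identical.
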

The quantity in the denominator will be denoted by us as 
$$
	Z_L(\ell)\, =\, \left\|\widehat{S}^+_{\ell} \Psi^{\mathrm{FM}}_{\min}(0;L)\right\|^2\, ,
$$
so that we may write
$$
\mathcal{E}_L^{\mathrm{SMA}}(\ell)\, =\, 
\frac{8}{3}\, 
\left(\frac{L}{4} - E^{\mathrm{FM}}_{\min}(0;L)\right)\,
\frac{\sin^2\left(\pi \ell/L\right)}{Z_L(\ell)}\, .
$$
From its definition, $Z_L(\ell)>0$.
Numerically, it appears to be maximized at $\ell=1$ and $L-1$.
See   Figure \ref{fig:StructureFactor}.
\begin{figure}
$$
\begin{tikzpicture}[xscale=0.75,yscale=0.06]
	\draw[very thick,->] (0,-1) -- (0,50) node[above] {\small $Z_L(\ell)$};
	\draw[very thick,->] (-10.5,0) -- (11,0) node[right] {\small $\ell$};
	\draw 
(   -9.0000,    6.6663) node[] {\small $\bullet$} --
(   -8.0000 ,   6.6659)  node[] {\small $\bullet$} --
(   -7.0000  ,  6.6657)  node[] {\small $\bullet$} --
(   -6.0000   , 6.6641)  node[] {\small $\bullet$} --
(   -5.0000    ,6.6636)  node[] {\small $\bullet$} --
(   -4.0000,    6.6442)  node[] {\small $\bullet$} --
(   -3.0000 ,   6.7201)  node[] {\small $\bullet$} --
(   -2.0000  ,  4.9590)  node[] {\small $\bullet$} --
(   -1.0000   ,45.0180)  node[] {\small $\bullet$} --
(         0,         0)  node[] {\small $\bullet$} --
(    1.0000,   45.0180)  node[] {\small $\bullet$} --
(    2.0000 ,   4.9590)  node[] {\small $\bullet$} --
(    3.0000  ,  6.7201)  node[] {\small $\bullet$} --
(    4.0000   , 6.6442)  node[] {\small $\bullet$} --
(    5.0000    ,6.6636)  node[] {\small $\bullet$} --
(    6.0000,    6.6641)  node[] {\small $\bullet$} --
(    7.0000 ,   6.6657)  node[] {\small $\bullet$} --
(    8.0000  ,  6.6659)  node[] {\small $\bullet$} --
(    9.0000   , 6.6663)  node[] {\small $\bullet$} --
(   10.0000   , 6.6662) node[] {\small $\bullet$};
	\draw (0.25,20/3) -- (-0.25,20/3) node[left] {\small $20/3$};
	\draw (0.25,45) -- (-0.25,45) node[left] {\small $45$};
	\draw (10,2) -- (10,-2) node[below] {\small $10$};
	\draw (-10,2) -- (-10,-2) node[below] {\small $-10$};
\end{tikzpicture}
$$
\caption{
Numerical calculation of  $Z_L(\ell) = \|\widehat{S}^+_{\ell}  \Theta_L\|^2$ plotted against $\ell$ for $L=20$.
The range for $\ell$ is $-9$ to $10$, as plotted.
\label{fig:StructureFactor}
}
\end{figure}
A reasonable guess, based on analysis by Dhar and Shastry that we summarize in the next section,
is that $Z_L(1)=Z_L(L-1)\sim L^2/8$.
One may prove that for sufficiently large $L$, one has $E^{\mathrm{FM}}_{\min}(0;L)<L/4$.
(See an earlier preprint version of
this article on the arXiv for full details of a proof that $E^{\mathrm{FM}}_{\min}(0;L)<L/4$
for sufficiently large $L$.)
Empirically, this seems to be true as long as $L>4$.
Therefore, all the trial energies according to the SMA are positive, with the smallest occurring for $\ell=1$ or $L-1$.

In fact, $E^{\mathrm{FM}}_{\min}(0;L)$ is asymptotic to $\pi^2/(2L)$, according to the same
analysis of Dhar and Shastry.
So we would end up with a guess  for the SMA minimum energy as 
$\mathcal{E}_{L}^{\mathrm{SMA}}(1)=\mathcal{E}_{L}^{\mathrm{SMA}}(L-1) \sim 8\pi^2/(3L^3)$.
Numerically this seems to be approximately 4 times too large.
See Figure \ref{fig:DataAn}.
In other words, the SMA seems to give a gap $E_{\min}^{\mathrm{FM}}(1;L)-E_{\min}^{\mathrm{FM}}(0;L)$
which is about 4 times too large.
But it has the correct sign.
\begin{figure}
\begin{center}
{\scriptsize
\begin{equation*}
\begin{tabular}{r|c|c|c|c|c|c|c|c|c}
$L$  &
4 &
6 &
8 &
10 &
12 &
14 &
16 &
18 & 
20\\
\hline
$E^{\mathrm{FM}}_{\min}(1;L)-
E^{\mathrm{FM}}_{\min}(0;L)$   &
0  &  
0.021999  & 
0.0147800 &
0.0094665  &  
0.0062755  &  
0.0043340  &  
0.0031045  &  
0.0022945  &
0.0017415\\
\hline
$\mathcal{E}_L^{\mathrm{SMA}}(1)$ &
                   0 &
   0.079011&
   0.0557607&
   0.0365200&
   0.0244969&
   0.0170351&
   0.0122578&
   0.0090874&
   0.0069116
\end{tabular}
\end{equation*}
}
\end{center}
\caption{
The difference $E^{\mathrm{FM}}_{\min}(1;L) - E^{\mathrm{FM}}_{\min}(0;L)$, calculated numerically.
The quantity $\varepsilon_{\mathrm{SMA}}(1/2,L)$ is equal to the 
SMA: $(8/3) \sin^2(\pi/L) 
(\frac{1}{4}L^2 - E^{\mathrm{FM}}_{\min}(0;L))/\|\widehat{S}^+_{1}\Psi^{\mathrm{FM}}_{\min}(0;L)\|^2$.
It is approximately $4$ times too large.
\label{fig:DataAn}}
\end{figure}

Let us quickly explain why Corollary \ref{cor:SMA} follows from Lemma \ref{lem:first}.
%
%
We know that
$$
\left(\widehat{S}^{+}_{\ell}\right)^*\, =\, \widehat{S}^-_{-\ell}\, .
$$
But also, defining spin-flip symmetry by the unitary $\mathcal{F} = \exp\left(i \pi S_{[1,L]}^{(1)}\right)$, we have
$$
\widehat{S}^{-}_{-\ell}\, =\, \mathcal{R} \mathcal{F} \widehat{S}^{+}_{\ell}
\mathcal{F}\mathcal{R}\, .
$$
So the following corollary is applicable.
\begin{cor}
\label{cor:Pert}
Suppose that $A$ is an operator on $\Hil_{[1,L]}$ such that there is some unitary operator
$U$ on $\Hil_{[1,L]}$, satisfying
\begin{itemize}
\item $A^*=U^*AU$, 
\item $H^{\mathrm{FM}}_L = U^* H^{\mathrm{FM}}_L U$, and
\item $\mathcal{C}_{[1,L]} = U^* \mathcal{C}_{[1,L]} U$.
\end{itemize}
Then we have the double-commutator formula for the first-order perturbation term:
$$
\langle \Theta_L\, ,\ A^* 
\big(H_{L}^{\mathrm{FM}}-E^{\mathrm{FM}}_{\min}(0;L)\big)
A\,  \Theta_L\rangle\, 
=\, \frac{1}{2}\, \langle \Theta_L\, ,\ [A^*,[H_L^{\mathrm{FM}},A]]\,  \Theta_L\rangle\, .
$$
\end{cor}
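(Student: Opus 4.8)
The plan is to reduce the identity to a single reflection-symmetry statement and then verify that statement using the non-degeneracy from Lemma \ref{lem:first}. Write $H = H^{\mathrm{FM}}_L$ and $E = E^{\mathrm{FM}}_{\min}(0;L)$ for brevity, so that $H\Theta_L = E\Theta_L$ and, by self-adjointness of $H$, also $\langle \Theta_L, H\,\cdot\,\rangle = E\langle\Theta_L,\,\cdot\,\rangle$. First I would expand the double commutator
$$
[A^*,[H,A]]\, =\, A^*HA - A^*AH - HAA^* + AHA^*\, ,
$$
take the expectation in $\Theta_L$, and apply $H\Theta_L = E\Theta_L$ to the two terms where $H$ sits at the outer edge adjacent to $\Theta_L$. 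Those middle two terms collapse to $E\langle\Theta_L,A^*A\,\Theta_L\rangle$ and $E\langle\Theta_L,AA^*\,\Theta_L\rangle$, leaving
$$
\langle\Theta_L,[A^*,[H,A]]\,\Theta_L\rangle\, =\, \langle\Theta_L,A^*(H-E)A\,\Theta_L\rangle + \langle\Theta_L,A(H-E)A^*\,\Theta_L\rangle\, .
$$
Hence the corollary is equivalent to the reflection identity asserting that these two summands are equal.

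Second, I would establish that reflection identity by conjugating the operator $B = A(H-E)A^*$ by $U$. Using the three hypotheses — namely $U^*AU = A^*$, $U^*HU = H$ (which together with $U^*U=\mathbbm{1}$ gives $U^*(H-E)U = H-E$), and, taking the adjoint of the first hypothesis, $U^*A^*U = A$ — one computes directly that
$$
U^* B U\, =\, (U^*AU)\,\big(U^*(H-E)U\big)\,(U^*A^*U)\, =\, A^*(H-E)A\, .
$$
It then only remains to transfer the conjugating factors $U$ onto the state $\Theta_L$.

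Third — and this is the one place where the specific structure genuinely enters — I claim $U\Theta_L = \lambda\,\Theta_L$ for a scalar $\lambda$ with $|\lambda|=1$. Indeed, since $U$ commutes with $\mathcal{C}_{[1,L]}$ it preserves the sector $\Hil_{[1,L]}^{(0)}$, and since $U$ commutes with $H$ it preserves the lowest eigenspace of $H$ within that sector; by the non-degeneracy in Lemma \ref{lem:first} this eigenspace is the line $\operatorname{span}(\Theta_L)$, so $\Theta_L$ is an eigenvector of the unitary $U$ with eigenvalue of modulus one. Consequently
$$
\langle\Theta_L, A^*(H-E)A\,\Theta_L\rangle\, =\, \langle\Theta_L, U^*BU\,\Theta_L\rangle\, =\, \langle U\Theta_L, B\,U\Theta_L\rangle\, =\, |\lambda|^2\langle\Theta_L, B\,\Theta_L\rangle\, =\, \langle\Theta_L, A(H-E)A^*\,\Theta_L\rangle\, ,
$$
which is exactly the reflection identity; substituting it into the commutator expansion produces the factor $\tfrac{1}{2}$. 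The main subtlety to check carefully is this appeal to Lemma \ref{lem:first}: the argument requires that $U$ be simultaneously a symmetry of $\mathcal{C}_{[1,L]}$ and of $H$, so that it really does stabilize the relevant one-dimensional space, and it requires that the phase $\lambda$ enter only through $|\lambda|^2=1$ and therefore cancel. Everything else is the routine bilinear manipulation above. In the intended application $A = \widehat{S}^+_\ell$ and $U = \mathcal{F}\mathcal{R}$, which is self-adjoint with $U^*=\mathcal{R}\mathcal{F}$, and the three hypotheses are precisely the relations $\widehat{S}^-_{-\ell} = \mathcal{R}\mathcal{F}\widehat{S}^+_\ell\mathcal{F}\mathcal{R}$ together with the commutation of $\mathcal{F}$ and $\mathcal{R}$ with $H$ and $\mathcal{C}_{[1,L]}$ recorded just before the corollary.
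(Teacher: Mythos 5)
Your proposal is correct and follows essentially the same route as the paper's own proof: expand the double commutator and use $H\Theta_L = E\Theta_L$ to reduce the identity to the equality of $\langle\Theta_L, A^*(H-E)A\,\Theta_L\rangle$ and $\langle\Theta_L, A(H-E)A^*\,\Theta_L\rangle$, then deduce $U\Theta_L = e^{i\theta}\Theta_L$ from the non-degeneracy in Lemma \ref{lem:first} so that conjugation by $U$ interchanges the two terms with the phase cancelling. The only cosmetic difference is that you expand $[A^*,[H,A]]$ directly, which renders the paper's closing remark that $[A^*,[H-c\mathbbm{1},A]]$ is independent of $c$ unnecessary.
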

\begin{proof}
Because of the properties $U\Theta_L$ is an eigenvector of 
$H^{\mathrm{FM}}_{L}$ with eigenvalue 
$E^{\mathrm{FM}}_{\min}(0;L)$, and $U\Theta_L$ is in $\Hil_{[1,L]}^{(0)}$.
Therefore, by Lemma \ref{lem:first}, we have 
$U\Theta_L = e^{i \theta} \Theta_L$ for some $\theta \in \R$.
Thus,
\begin{equation*}
\begin{split}
	\langle \Theta_L\, ,\ A \big(H^{\mathrm{FM}}_{L}
	-E^{\mathrm{FM}}_{\min}(0;L)\big) A^* \, \Theta_L\rangle\,
	=\,
	\langle \Theta_L\, ,\ U^* A^* \big(H^{\mathrm{FM}}_{L}
	-E^{\mathrm{FM}}_{\min}(0;L)\big) A U\, \Theta_L\rangle\\
	=\, 	\langle \Theta_L\, ,\ A^* \big(H^{\mathrm{FM}}_{L}
	-E^{\mathrm{FM}}_{\min}(0;L)\big) A\, \Theta_L\rangle\,
\end{split}
\end{equation*}
But a trivial calculation shows that 
\begin{equation*}
\begin{split}
	\left\langle \Theta_L\, ,\ \left[A^*\, ,\ \left[\big(H^{\mathrm{FM}}_{L}
	-E^{\mathrm{FM}}_{\min}(0;L)\big)\, ,\ A\right]\right] \,
	\Theta_L\right\rangle\,
	&=\,
	\langle \Theta_L\, ,\ A \big(H^{\mathrm{FM}}_{L}
	-E^{\mathrm{FM}}_{\min}(0;L)\big) A^*\, \Theta_L\rangle\\
	&\qquad+	\langle \Theta_L\, ,\ A^* \big(H^{\mathrm{FM}}_{L}
	-E^{\mathrm{FM}}_{\min}(0;L)\big) A\, \Theta_L\rangle\, ,
\end{split}
\end{equation*}
owing to the fact that $H^{\mathrm{FM}}_{L}
	-E^{\mathrm{FM}}_{\min}(0;L)$ annihilates $\Theta_L$.
Therefore, this proves the corollary because $[A^*,[H-c\mathbbm{1},A]]$
is independent of $c$.
\end{proof}
To prove Corollary \ref{cor:SMA}, apply this with the unitary $U = \mathcal{R} \mathcal{F} = \mathcal{F} \mathcal{R}$
which satisfies $U^*=U$.
A standard exercise shows that, for $q_L=e^{2\pi i/L}$:
\begin{equation}
\label{eq:DoubleCommFinal}
\left[\widehat{S}_{-\ell}^-,  \left[\widehat{S}_{\ell}^+,H_{L}^{\mathrm{FM}}\right]\right]\,
	=\,  -\sum_{\alpha=1}^{L} \Big(2(1-q_L^\ell)(1-q_L^{-\ell}) S_\alpha^{(3)} S_{\alpha+1}^{(3)} 
+(1-q_L^\ell) S_\alpha^- S_{\alpha+1}^+
+(1-q_L^{-\ell})S_\alpha^+ S_{\alpha+1}^{-}\Big)\, .
\end{equation}
(The interested reader may see an earlier preprint version of this article available
on the arXiv, where full details of the proof of Corollary \ref{cor:SMA} are provided.)
Since $\mathcal{T} \Theta_L = e^{i\theta} \Theta_L$
(indeed for $\theta=\pi$ by properties of the Hulth\'en bracket basis),
this gives the result, also using $\mathrm{SU}(2)$ symmetry.

\section{Review of previous results: Dhar and Shastry}

In \cite{DharShastry}, Dhar and Shastry considered the problem of finding $E^{\mathrm{FM}}_{\min}(S;L)$
for all choices of spins $S$.
One result is that they showed that the Bethe ansatz equations
for these vectors reduce to a 1-parameter generalization of the Stieltes equations familiar from solving for roots
of orthogonal polynomials.
The advantage is that their equations are explicitly algebraically solvable rational equations; whereas, the Bethe ansatz equations are transcendental integrable systems.
In addition they posited that the vectors  may be written as 
$\Xi_{L}^{\mathrm{Bloch}}(S)$ or $\mathcal{R} \Xi_L^{\mathrm{Bloch}}(S)$, where
\begin{equation*}
\begin{split}
	\Xi_{L}^{\mathrm{Bloch}}(S)\, =\, \left(\widehat{S}^{-}_{1}\right)^{(L/2)-S} \Big(\Psi_{1/2,1/2} \otimes \cdots \otimes \Psi_{1/2,1/2}\Big)\, .\\[-10pt]
&\hspace{-3.75cm} \begin{tikzpicture} \draw (0,0) node[rotate=90] {\Huge $\Bigg\{$}; \end{tikzpicture}\\[-25pt]
&\hspace{-2.45cm} L
\end{split}
\end{equation*}
They call these Bloch wall states
They argued that the limiting form of the energies are given by the formula
\begin{equation*}
	\widetilde{E}^{\mathrm{Bloch}}_{L}(S)\, =\, \frac{\pi^2}{2L}\, \cdot \left(1-\frac{4S^2}{L^2}\right)\, ,
\end{equation*}
at least asymptotically when $L \to \infty$.
One can check from  Figure \ref{fig:BlochWallApprox} that this is close to the exact formula for $L=20$.
But, as noted before, we find numerically that there is an energy turnaround at $S=0$.
In their own paper, in their Figure 2, they showed the deviation for $L=16$, where there is also
a turnaround at the spin singlet.

\begin{remark}
An interesting point is that for the spin $1/2$  XXZ spin ring with Ising like anisotropy,
\begin{equation*}
H_{[1,L]}^{(q)}\, =\, \sum_{\alpha=1}^{L} \left(\frac{1}{4}\, \mathbbm{1} - S^{(3)}_{\alpha} S^{(3)}_{\alpha+1}
-\frac{2}{q+q^{-1}}\left(S^{(1)}_{\alpha}S^{(1)}_{\alpha+1} + S^{(2)}_{\alpha} S^{(2)}_{\alpha+1}\right)\right)\, ,
\end{equation*}
with $0\leq q\leq 1$, it is known that the elementary excitations are literal droplets
of spin, with both a left and right domain wall, asymptotically when $L \to \infty$. See \cite{NachtergaeleStarrDroplet}.
Moreover, there is a closely related $\mathrm{SU}_q(2)$-symmetric version of the XXZ chain with kink or anti-kink
boundary conditions, where the droplets are still the elementary excitations, apparently created by introducing 
an interface of the opposite type: a droplet consists of both a kink and anti-kink interface at the left and right domain walls.
See \cite{NSSdroplet}.
\end{remark}

Dhar and Shastry's  Bloch wall states are not exact eigenstates of the Casimir operator $\mathcal{C}_{\mathrm{tot}}$.
But they are approximate eigenstates.
At $S=0$
there are two distinct Bloch walls,
$\Xi_{L}^{\mathrm{Bloch}}(0)$ and 
$\mathcal{R} {\Xi}_{L}^{\mathrm{Bloch}}(0)$.
But Lemma \ref{lem:first} implies that the lowest energy spin singlet
is non-degenerate.
In an earlier preprint version of this article on the arXiv we showed that the inner-product may be
expressed in terms of Gaussian binomial coefficients and is easily seen to be on the order of $1/2^L$.
We also note that similar types of calculations show that if the Bloch wall $\Xi_L^{\mathrm{Bloch}}(0)$
were exactly equal to $\Theta_L$ then we would have
$Z_L(1) = (L/2)((L/2)+1)$, while for $\ell\neq 1$ we would have  $Z_L(\ell)=(L/2)((L/2)-1)/(L-1)$
(The fact that this is not $0$ at $\ell=0$ is due to the fact that $\Xi_L^{\mathrm{Bloch}}(0)$ is not
exactly a spin $S=0$ state.)

Therefore, a standard textbook guess takes over.
One would guess that the Hamiltonian actually connects these two trial wave vectors, albeit possibly with a small matrix entry.
Then there is an energy splitting based on the symmetry.
This guess would also help to explain why $E^{\mathrm{FM}}_{\min}(0;L) < E^{\mathrm{FM}}_{\min}(1;L)$
because the slope of $\widetilde{E}^{\mathrm{Bloch}}_L(S)$ becomes $0$ at $S=0$,
so that the splitting could cause the $S=0$ eigenvalue to be a bit smaller than the $S=1$ eigenvalue.
If we accepted this guess then the new guess for $Z_L(\ell)$ would be 
$Z_L(1)=Z_L(L-1)\sim L^2/8$ while for $\ell\not\in\{1,L-1\}$ we would guess $Z_L(\ell) \sim L/4$.

This seems like a textbook description, except that the second eigenvalue from the splitting should then be just slightly
higher.
But it is not: the spectral gap in the $S=0$ sector is numerically seen to be bigger than that simple guess
would suggest.

\begin{figure}
$$
\begin{tikzpicture}[xscale=1.5,yscale=15]
\draw (10,0) node[] {\tiny $\bullet$}
-- (9,0.04894) node[] {\tiny $\bullet$}
-- (8,0.09195) node[] {\tiny $\bullet$}
-- (7,0.12911) node[] {\tiny $\bullet$}
-- (6,0.16052) node[] {\tiny $\bullet$}
-- (5,0.18625) node[] {\tiny $\bullet$}
-- (4,0.20638) node[] {\tiny $\bullet$}
-- (3,0.22098) node[] {\tiny $\bullet$}
-- (2,0.23010) node[] {\tiny $\bullet$}
-- (1,0.23377) node[] {\tiny $\bullet$}
-- (0,0.23203) node[] {\tiny $\bullet$};
\draw[line width=2pt,black!25!white,opacity=0.5] (10,0)  node[black] {\tiny $\boldsymbol{\circ}$}
-- (9,0.0469)  node[black] {\tiny $\boldsymbol{\circ}$}
-- (8,0.0888)  node[black] {\tiny $\boldsymbol{\circ}$}
-- (7,0.1258)  node[black] {\tiny $\boldsymbol{\circ}$}
-- (6,0.1579)  node[black] {\tiny $\boldsymbol{\circ}$}
-- (5,0.1851)  node[black] {\tiny $\boldsymbol{\circ}$}
-- (4,0.2073)  node[black] {\tiny $\boldsymbol{\circ}$}
-- (3,0.2245)  node[black] {\tiny $\boldsymbol{\circ}$}
-- (2,0.2369)  node[black] {\tiny $\boldsymbol{\circ}$}
-- (1,0.2443)  node[black] {\tiny $\boldsymbol{\circ}$}
-- (0,0.2467)  node[black] {\tiny $\boldsymbol{\circ}$};
\draw[very thick,->] (0,-0.05) -- (0,0.3) node[above left] {\small $E$};
\draw[very thick] (0.1,0.25) -- (-0.1,0.25) node[left] {\small $0.25$};
\draw[very thick,->] (-0.25,0) -- (10.5,0) node[below] {\small $S$};
\foreach \x in {1,2,...,10} {
	\draw[very thick] (\x,0.01) -- (\x,-0.01) node[below] {\small \x};}
\end{tikzpicture}
$$
\caption{
The black dots are the actual energies $E^{\mathrm{FM}}_{\min}(S;L)$ for $L=20$. 
The circles (connected by grey lines) are the Bloch wall approximations calculated by Shastry and Dhar to be 
$\widetilde{E}^{\mathrm{Bloch}}_{20}(S)$.
The approximation is fairly close, with deviations primarily near $S=0$.
\label{fig:BlochWallApprox}}
\end{figure}
See, for instance, Figure \ref{fig:full} for the full spectrum for $L=14$, calculated numerically.
\begin{figure}[]
$$
\begin{tikzpicture}[xscale=0.9,yscale=3]
\foreach \y in {0.5125,1.4066,1.7488        }
	\draw (0,\y) node[] {$\bullet$};
\foreach \y in {0.9047,1.2002,1.3976    ,1.4579    ,1.4962, 1.6305, 1.7238     ,1.8112    ,    1.9743   }
	\draw[gray] (-0.1,\y) -- (0.1,\y) node[right] {\tiny $2$};
\foreach \y in {0.5370    ,1.2900    , 1.4372  ,1.8076       }
	\draw (1,\y) node[] {$\bullet$};
\foreach \y in {0.5170    ,0.8549    ,0.9319    ,0.9684    ,1.0966  ,1.2309    ,1.2416    ,    1.3231    ,1.3850    ,1.4166    ,1.4339   ,1.4856 ,1.5168    , 1.5435,1.5443    ,  1.5700    , 1.5794    ,1.6656    ,1.7415,1.7605    ,1.7694    ,1.7719  ,1.8439,  1.8464    ,1.8803    ,1.8974    ,1.9162    ,1.9950}
	\draw[gray] ( (0.9,\y) -- (1.1,\y) node[right] {\tiny $2$};
\foreach \y in {0.5884    ,1.5048         }
	\draw (2,\y) node[] {$\bullet$};
\foreach \y in { 0.5066    ,0.5676    ,0.7922    ,0.9120    ,0.9825,0.9901    ,1.0191    ,1.0248    ,1.0777    ,1.1629,1.2977    ,1.2992 ,1.3226  ,1.3957    ,1.4166   ,1.4343    ,1.4421    ,1.4816     ,1.5217    ,1.5443    , 1.5863    ,1.6019    ,1.6143    ,1.6572    ,1.6623    ,1.6743    ,1.7207,1.7395    ,1.7478    ,1.7800    ,1.8407    ,1.8475    ,1.8683    ,1.8713    ,1.8909    ,1.9309    ,1.9331    ,1.9407    ,1.9445    ,1.9478    ,1.9581    ,1.9708    ,1.9917}
	\draw[gray]  (1.9,\y) -- (2.1,\y) node[right] {\tiny $2$};
\foreach \y in { 1.3867        }
	\draw[gray]  (1.8,\y) -- (2.2,\y) node[right] {\tiny $4$};
\foreach \y in { 0.6725     ,0.9040 ,1.5296 ,1.6292      }
	\draw (3,\y) node[] {$\bullet$};
\foreach \y in {0.4812    ,0.5850    ,0.6502    ,0.7162    ,0.8571,0.8818    ,1.0084    ,1.0566    ,1.0880    ,1.0892    ,1.1068    ,1.1208    ,1.1961    ,1.2070    ,1.2421    ,1.2814    ,1.3864    ,1.4055    ,1.4252    ,1.4804    ,1.4941    ,1.5093,1.5191    ,   1.5317, 1.5365    ,1.5374    ,1.5996    ,1.6234    ,1.6426    ,1.6628    ,1.6697    ,1.6988    ,1.7065    ,1.7071    ,1.7076    ,1.8189    ,1.8289    ,1.8295,1.8458    ,1.8622    ,1.8673    ,1.8998  ,  1.9267    ,1.9344    ,1.9570    ,1.9683    ,1.9864    ,1.9963}
	\draw[gray]  (2.9,\y) -- (3.1,\y) node[right] {\tiny $2$};
\foreach \y in {1.8925    }
	\draw[gray]  (2.8,\y) -- (3.2,\y) node[right] {\tiny $4$};
\foreach \y in {0.8012  ,1.8701   }
	\draw (4,\y) node[] {$\bullet$};
\foreach \y in {0.4404    ,0.5890    ,0.6265    ,0.7033    ,0.7194    ,0.7761    ,0.8405    ,1.0043    ,1.0234  ,1.0834    ,1.1646    ,1.1772    ,1.2494,1.2523    ,1.2670    ,1.2795    ,1.3190    ,1.3821,1.3907    ,1.4486    ,1.4947  ,1.5257    ,  1.5472    , 1.5925    , 1.6103    ,1.6642    ,1.6734    ,1.6758    ,1.6897    ,1.6954    ,1.7056    ,1.7084    ,1.7085    ,1.7548    ,1.7571    ,1.7758    ,1.8003    ,1.8037    ,1.8218    ,1.8269    ,1.8325    ,1.8508    ,1.8526,1.8851    ,1.9173    ,1.9544    }
	\draw[gray] ( (3.9,\y) -- (4.1,\y) node[right] {\tiny $2$};
\foreach \y in { 1.0803        }
	\draw[gray]  (3.8,\y) -- (4.2,\y) node[right] {\tiny $4$};
\foreach \y in {0.5686    ,1.0001,1.2756    }
	\draw (5,\y) node[] {$\bullet$};
\foreach \y in {0.3843    ,0.5225   ,0.5795    ,0.7485    ,0.7876,0.8825    ,0.9103    ,0.9500    ,0.9698    ,1.0357    ,1.0895    ,1.2341    ,1.2398    ,1.2566    ,1.2583    ,  1.3441    ,1.3805    ,1.4344    ,1.5047,1.5084 ,1.5112    ,  1.5409    , 1.5456    ,1.5591,1.5659    ,1.6259    ,1.6439    ,1.6578    ,1.7291    ,1.7548    ,1.8210    ,1.8566    ,1.8916    ,1.8951,1.9007 ,1.9488    ,1.9507    ,  1.9809}
	\draw[gray]  (4.9,\y) -- (5.1,\y) node[right] {\tiny $2$};
\foreach \y in { 1.3333    }
	\draw (6,\y) node[] {$\bullet$};
\foreach \y in {0.3124    ,0.4039    ,0.5563    ,0.7224    ,0.7871    ,0.8048    ,1.0002    ,1.0476    ,1.0824    ,1.1482    ,1.1774    ,1.2715    ,1.2875    ,1.2931    ,  1.3799,1.4098    ,1.4974, 1.6086    , 1.7289    , 1.7370, 1.7379    ,1.7453    ,1.8781    ,1.8983    ,1.9524    ,1.9731    ,1.9944}
	\draw[gray]  (5.9,\y) -- (6.1,\y) node[right] {\tiny $2$};
\foreach \y in {0.2701   , 1.0440 ,2.0000}
	\draw (7,\y) node[] {$\bullet$};
\foreach \y in { 0.2247   ,0.5192    ,0.6442    ,0.8219    , 1.0561    ,1.1742    ,1.1982    ,1.5000   ,1.6204    , 1.7045    , 1.7660    ,1.8651    ,1.9397    }
	\draw[gray] (6.9,\y) -- (7.1,\y) node[right] {\tiny $2$};
\foreach \y in {  0.1206  ,  0.4679  ,  1.0000  ,1.6527       }
	\draw[gray] (7.9,\y) -- (8.1,\y) node[right] {\tiny $2$};
\foreach \y in {0}
	\draw (9,\y) node[] {$\bullet$};
\foreach \x in {0,1,...,9}
	\draw (\x,-0.125) node[] {\scriptsize $S=\x$};
\draw[very thick,->] (-0.5,-0.125) -- (-0.5,2.125) node[above left] {$E$};
\draw (-0.4,0) -- (-0.6,0) node[left] {\small $0$};
\draw (-0.4,0.5) -- (-0.6,0.5) node[left] {\small $0.25$};
\draw (-0.4,1) -- (-0.6,1) node[left] {\small $0.5$};
\draw (-0.4,1.5) -- (-0.6,1.5) node[left] {\small $0.75$};
\draw (-0.4,2) -- (-0.6,2) node[left] {\small $1$};
\draw[] (0,0.548311) -- (0.5,0.551704) --  (1,0.561977) -- (1.5,0.579426)-- (2,0.604575) -- (2.5,0.638236) -- (3,0.681593) --(3.5,0.736357)-- (4,0.805036) --(4.5,0.891347) -- (5,1.00101) -- (5.5,1.14316) -- (6,1.33338) -- (6.5,1.60007) -- (6.75,1.77757) -- (7,1.9995);
	\draw (14,1) node[] {\includegraphics[width=7cm]{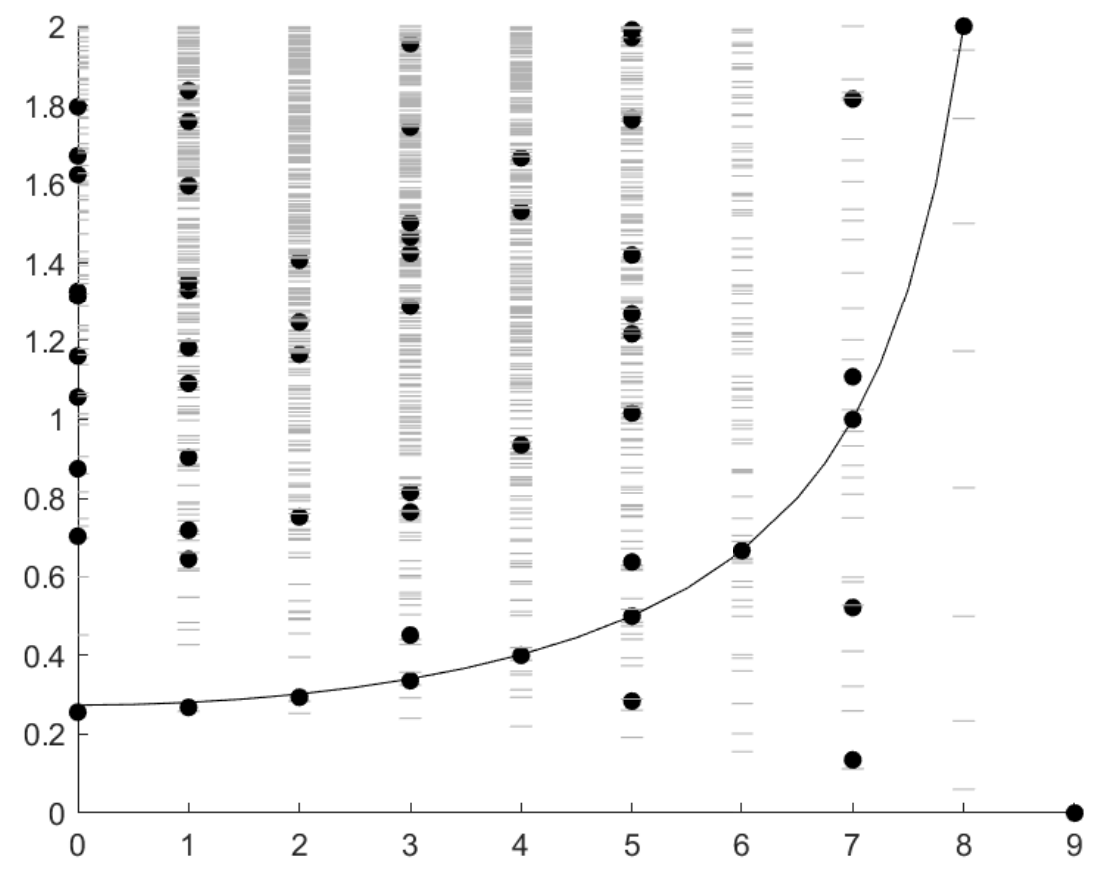}};
\end{tikzpicture}
$$
\caption{Two pictures of the full spectrum of the spin-$1/2$ spin ring for $L=18$ up to a energy cut-offs: $E=2$
for the left plot and $E=4$ for the right plot.
In the left plot, the cases of 2-fold
degeneracy are indicated with gray line segments with labels $2$. The cases of 4-fold are indicated with longer line segments with labels 4.
The non-degenerate energy levels are indicated with black bullet points.
The  curve is Sutherland's formula.
\label{fig:full}}
\end{figure}

\section{Review of previous results: Sutherland}
In \cite{Sutherland}, Bill Sutherland considered the spectrum of $H^{\mathrm{FM}}(T_L)$
using the Bethe ansatz.
In \cite{Bethe}, Hans Bethe posited a wave vector of the following form
$$
\Lambda^{\mathrm{Bethe}}_L(f_1,\dots,f_n)\, =\, \sum_{\pi \in \mathfrak{S}_n}  \sum_{1\leq \alpha_1<\dots<\alpha_n\leq L} \xi_{\pi}(\boldsymbol{f},\boldsymbol{\alpha})
\cdot \left(S_{\alpha_1}^- \cdots S_{\alpha_n}^-\right)\, \cdot
\left(\Psi_{1/2,1/2} \otimes \cdots \otimes \Psi_{1/2,1/2}\right)\, ,
$$
where, $\mathfrak{S}_n$ is the set of bijections $\pi : \{1,\dots,n\} \to \{1,\dots,n\}$ and
for $\boldsymbol{f} = (f_1,\dots,f_n)$ and $\boldsymbol{\alpha} = (\alpha_1,\dots,\alpha_n)$, we have
$$
\xi_{\pi}(\boldsymbol{f},\boldsymbol{\alpha})\, =\, \exp\left(i\sum_{1\leq j<k\leq n}\varphi_{\pi(j),\pi(k)}\right)
\exp\left(i \sum_{k=1}^{n} f_{\pi(k)} \alpha_k\right)\, .
$$
Here the constants $\varphi_{j,k}$ are chosen to solve
$$
	2\, \cot\left(\frac{\varphi_{j,k}}{2}\right)\, =\, \cot\left(\frac{f_j}{2}\right) - \cot\left(\frac{f_k}{2}\right)\, ,\ \text{ with $-\pi\leq \varphi_{j,k}<\pi$ for each $j<k$.}
$$
Then the Bethe ansatz reduces the eigenvalue problem for the Heisenberg Hamiltonian to the determination of the wavenumbers $f_1,\dots,f_n$.
Calling the numbers $\cot(f_k/2)$ the ``Bethe roots,'' these numbers solve a 
particular set of transcendental equations called the Bethe ansatz equations (BAE).
Bethe found the ``string states,'' which are characterized by the property that there is some constant $a$ such that
$$
	\cot(f_k/2)\, =\, a - i \kappa\, ,\ \text{ for $\kappa = -(n-1),-(n-3),\dots,(n-1)$.}
$$
It turns out that the total momentum is then 
$$
\theta\, =\, f_1+\dots+f_n\, ,
$$
and the constant $a$ is $n \cot(\theta/2)$. The energy of such a state is 
$$
	E\, =\, \frac{1}{n}\, (1-\cos(\theta))\, ,
$$
with our convention for the Hamiltonian.
Here $n=\frac{L}{2}-S$.

Sutherland noted that the string states break-down if $n/L$ becomes appreciable.  He pointed out that Bethe has assumed $n$ to be fixed, while thinking of $L$
large, when he found the string solutions.
Specifically, thinking of the Bethe roots in the string solutions as occupying vertical strings, Sutherland found that the Bethe roots accumulate
along other curves, which are vertical in parts -- with density of 1 Bethe root per 2 units of length in the $i$ direction.
But at certain points along the vertical string, the Bethe roots break off to follow
other curves, and with other densities.
As a particular case, he considered $\theta=\pi$, which turns out to be simplest.
Note that for $\theta=\pi$, Bethe energy relation is  $E=2/n$.
But Sutherland found this dispersion relation:
in terms of complete
elliptic integrals of the first and second kind, for $a>1$,
\begin{equation}
\label{eq:elliptic}
\begin{split}
	d\, &=\, \frac{1}{2} - \frac{a}{2}\, \left(1-\frac{E(1/a)}{K(1/a)}\right)\, ,\\
	\varepsilon\, &=\, 2 K\left(\frac{1}{a}\right)\left(2 E\left(\frac{1}{a}\right) - \left(1-\frac{1}{a^2}\right)K\left(\frac{1}{a}\right)\right)\, ,
\end{split}
\end{equation}
where $\varepsilon = E/L$
and $d=n/L \in [0,1/2]$.
Note that this is a correction to Bethe's formula in the sense that $L \varepsilon(a) \sim 2/(Ld(a))$ as $a \to 1^-$.
This can be determined by properties of the complete elliptic integrals.
Also, Sutherland noted that Taylor expanding the complete elliptic integrals,
$$
	\varepsilon\, =\, \frac{\pi^2}{2}\left(1+8\left(d-\frac{1}{2}\right)^2\right) + 
O\left(\left(d-\frac{1}{2}\right)^4\right)\ \text{ as $d \to \frac{1}{2}^-$.}
$$
He called this weak-paramagnetism, presumably because the first derivative vanishes at $d=0$, the spin singlet.
But this is weak paramagnetism just of the momentum $\pi$ sector.
Sutherland's curve agrees well with the momentum $\pi$ eigenvectors. See Figure \ref{fig:full}.
(Sutherland promised to write a longer paper fully elucidating all curves of the lowest momentum $\theta$ curves in each total spin $S$ space,
but we have not identified the follow-up paper. An important and interesting paper that did carry this idea further is more recent \cite{BBG}.)

We note that Dhar and Shastry also found corrections to Bethe's energy dispersion.
For Dhar and Shastry $\theta = 2n\pi/L$, for $n=0,1,\dots,L/2$.
But, whereas $E = \frac{1}{n}(1-\cos(2n\pi/L)) \sim 2n\pi^2/L^2$
is a very good approximation to $\widetilde{E}^{\mathrm{Bloch}}_L(\frac{L}{2}-n) = \frac{2\pi^2}{L^3} n (L-n)$ when $n\ll L$,
it deviates when $n/L$ is appreciable.
We note that Sutherland's paper on deviations from the string hypothesis was the first.

\subsection{Conclusion based on previous results}
What appears to happen in the $S=1$ sector is that the momentum $\pi$ eigenstate predicted by Sutherland is just a bit above
the $\theta=\pm(\pi - \frac{2\pi}{L})$ pair of eigenstates predicted by Dhar and Shastry's ``Bloch wave'' picture.
Thus, the elementary textbook guess for splitting of a pair of two degenerate trial wave vectors, that we stated before, 
needs to be ammended.
Instead the degeneracy is lifted by a more complicated type of perturbation theory with at least 3 vectors contributing.
Consulting Figure \ref{fig:full},
the spectral gap in the $S=0$ sector above $E^{\mathrm{FM}}_{\min}(0;L)$ appears to have a 2-fold degeneracy.
Moreover, the upward shift represents a substantially larger number than the downward deviation from Dhar and Shastry's prediction based on
the Bloch wall states.

\section{Outlook and extensions}

We have considered a certain conjectured property known as FOEL for the Heisenberg spin ring.
Note that the open chain without periodic boundary conditions was previously considered, and found to obey FOEL
without exceptions.
But the ring has 1 counterexample to FOEL: the spin singlet energy is slightly less than the spin triplet energy.
We make this claim based on exact diagonalization up to size 20, but within this numerical framework no other
counterexamples were discovered.
We have provided a rigorous theorem that, within the spin singlet sector, the restricted Hamiltonian has a 
unique ground state, positive in the Hulth\'en bracket basis.
This facilitates a double-commutator calculation of the energies in the SMA, and we find that all such energies
lead to a positive energy change.
This may be seen as a partial explanation for why the spin $S=0$ energy is less than the spin $S=1$ energy:
because the SMA changes the $S=0$ groundstate to an $S=1$ trial wave vector.

Everything in this paper was reported for spins-$1/2$.
Within this context, the Bethe ansatz results of Sutherland and later Dhar and Shastry are especially helpful.
The lowest energy eigenstate for every spin $S>0$ is given by a Bloch wall state, according to the analysis of Dhar
and Shastry.
But at $S=0$ there would be two distinct Bloch wall trial wavevectors.
So at minimum one would expect an energy splitting in perturbation theory.
Actually, there is a third vector at $S=1$ nearly intersecting the Bloch wall states: the impinging $\theta=\pi$
branch elucidated by Sutherland.
It appears that a 3-vector perturbation theory leads to a downward bend of the energy at the $S=0$ level,
plus an appreciable gap above this with a doubly degenerate excitation within the spin singlet sector.
One may note that extra hidden symmetries in the Heisenberg model may actually allow energy levels to ``cross''
due to complete integrability.

The most obvious extension is to higher spins and higher dimensions.
The original desideratum was to actually prove FOEL in dimensions $d=3$ and higher, as a step
towards trying to prove a phase transition for the Heisenberg ferromagnet.
The latter goal is an important open question. We refer to the foremost paper on the topic so far by
Correggi Giuliani and Seiringer \cite{CGS}.
The easier goal is to just understand all violations of FOEL in $d=1$ for higher spins $j$.

The numerics is more challenging for $j>1/2$.
For example, numerically the dimensions are such that if you could diagonalize a chain of length $L$ for spins-$1/2$,
then for spin $j$, the length you can diagonalize with the same set-up is $L'$ where $L' = L \cdot \ln(2)/\ln(2j+1)$.
We note that Lemma \ref{lem:first} is easy to generalize to higher spins. 
The Temperley-Lieb algebra extends to higher spins.
The higher spin version of the Hulth\'en bracket basis was already used by Nachtergaele and an author
to prove FOEL for higher spins on the open chains $C_L$ \cite{NachtergaeleStarr}.
Then the higher spin analogue of the SMA calculation holds, and was reported in an earlier preprint version
of this article, available on the arXiv.
To be sure, we conjecture that the minimum energy in $S=0$ is lower than the minimum energy in the $S=1$
sectors, for the even-length spin rings, for every choice of spin $j$.
But what would be interesting is to explore if new violations occur, or if the $S=0$/$S=1$ violation is the only one for 
all $j>1/2$ on the spin ring $T_L$.

A question of some interest is to verify Dhar and Shastry's claim that the Bloch wall states are the lowest energy
eigenstates in each $S\neq 0$ sector for all spins $j$. 
That is what they claimed in \cite{DharShastry}.
I.e., it is interesting to try to find a proof that bypasses the Bethe ansatz, {\em per se}.

One may note that the commutator method gives one approach to trying to prove that the Bloch walls
are good approximate eigenstates.
Note that, writing $H_L^{\mathrm{FM}}$ as an abbreviation for $H^{\mathrm{FM}}(T_L)$,
\begin{equation}
	\left[\widehat{S}_k^+,H_{L}^{\mathrm{FM}}\right]\, 
	=\, \sum_{{\alpha}=1}^{L}   \big( e^{2\pi i k{\alpha}/L}-e^{2\pi i k({\alpha}+1)/L}\big)  
\Big(S_{\alpha}^{+} S_{[{\alpha}+1]}^{(3)}- S_{\alpha}^{(3)} S_{[{\alpha}+1]}^{+}\Big)\, , 
\end{equation}
and 
\begin{equation}
	\left[\widehat{S}_{\ell}^+,\left[\widehat{S}_k^+,H_{L}^{\mathrm{FM}}\right]\right]\, 
	=\, \sum_{{\alpha}=1}^{L}   \big( e^{2\pi i k{\alpha}/L}-e^{2\pi i k({\alpha}+1)/L}\big)  
 \big( e^{2\pi i \ell{\alpha}/L}-e^{2\pi i \ell({\alpha}+1)/L}\big) S_{\alpha}^+ S_{[\alpha+1]}^+\, .
\end{equation}
So 
$$
\left[\widehat{S}_k^+,H_{L}^{\mathrm{FM}}\right]\, (\Psi_{1/2,1/2} \otimes \cdots \otimes \Psi_{1/2,1/2})\,
=\, (1-\cos(2\pi k/L)) \widehat{S}_k^+\, (\Psi_{1/2,1/2} \otimes \cdots \otimes \Psi_{1/2,1/2})\, ,
$$
while Fourier analysis shows that 
\begin{equation*}
	\left[\widehat{S}_{\ell}^+,\left[\widehat{S}_k^+,H_{L}^{\mathrm{FM}}\right]\right]\, 
	=\, (1-e^{2\pi i k/L})(1-e^{2\pi i \ell/L}) \sum_{m=0}^{L-1} e^{-2\pi m/L} \widehat{S}^+_m \widehat{S}^+_{k+\ell-m}\, .
\end{equation*}
In the commutator framework we write:
\begin{equation*}
\begin{split}
H_L^{\mathrm{FM}} \left(\widehat{S}^+_1\right)^n\, \left(\Psi_{1/2,1/2} \otimes \cdots \otimes \Psi_{1/2,1/2}\right)\\
&\hspace{-2cm}
=\,  \sum_{r=0}^{n-1} \left(\widehat{S}^+_1\right)^r \left[H_{L}^{\mathrm{FM}},\widehat{S}_1^+\right]\, 
\left(\widehat{S}^+_1\right)^{n-1-r}\, \left(\Psi_{1/2,1/2} \otimes \cdots \otimes \Psi_{1/2,1/2}\right)\\
&\hspace{-1cm} + \left(\widehat{S}^+_1\right)^n\, H_L^{\mathrm{FM}}\,  \left(\Psi_{1/2,1/2} \otimes \cdots \otimes \Psi_{1/2,1/2}\right)\, ,
\end{split}
\end{equation*}
while also
\begin{equation*}
\begin{split}
\left(\widehat{S}^+_1\right)^r \left[H_{L}^{\mathrm{FM}},\widehat{S}_1^+\right]\, 
\left(\widehat{S}^+_1\right)^{n-1-r}\, \left(\Psi_{1/2,1/2} \otimes \cdots \otimes \Psi_{1/2,1/2}\right)\\
&\hspace{-7cm}
=\,  \sum_{s=0}^{n-r-2} \left(\widehat{S}^+_1\right)^{r+s} \left[ \left[H_{L}^{\mathrm{FM}},\widehat{S}_1^+\right],
\widehat{S}_1^+\right]
\left(\widehat{S}^+_1\right)^{n-r-2-s}\, \left(\Psi_{1/2,1/2} \otimes \cdots \otimes \Psi_{1/2,1/2}\right)\\
&\hspace{-6cm} + \left(\widehat{S}^+_1\right)^{n-1}\, \left[H_{L}^{\mathrm{FM}},\widehat{S}_1^+\right]\, \left(\Psi_{1/2,1/2} \otimes \cdots \otimes \Psi_{1/2,1/2}\right)\, .
\end{split}
\end{equation*}
But, of course, all $\widehat{S}^+_k$'s commute with each other.

We hope to carry out this exercise, elsewhere.
It may be possible that, if you can diagonalize the system in the 0, $1$ and $2$-overturned spin sectors,
then you should be able to apply this reasoning to find that the Bloch walls are approximate
eigenvectors.
This seems very similar to the coordinate Bethe ansatz approach, but the hope is that it also applies to higher spins,
since Dhar and Shastry claimed the Bloch walls still apply.

For the numerics, we are exploring two approaches. The first, obvious open direction is in using DMRG.
We have opted for exact diagonalization so far because it involved the quickest coding approach (in terms of coding time
not CPU time) and allowed us to look for new patterns.
But now that we have precise conjectures, the dramatic improvement of speed warrants using DMRG.
The second approach is paralellization.
Using a program such as RUST or C++, instead of Matlab as we have been using, will also allow a speed-up, especially
if we parallelize the program.
Alternatively, with another researcher, Sardar Jaman, we have explored quantum Monte Carlo.
That work is ongoing.

We are going to conclude with some appendices with extraneous discussions.
In Appendix \ref{sec:MP} we will summarize some additional papers coming from mathematics
and mathematical physics that impact the result.
In Appendix \ref{sec:code} we provide the basic code that we used for the exact diagonalization.
Finally in Appendix \ref{app:TL} we give two quick diagrams to suggest the proof technique for Lemma \ref{lem:first}.

\section*{Acknowledgments}

The authors gratefully acknowledge the resources provided by the University of Alabama at Birmingham IT-Research Computing group for high performance computing (HPC) support and CPU time on the Cheaha compute cluster.
S.S.~is grateful to Sardar Jaman for useful discussions.
This work was supported by a grant from the Simons Foundation.
Support provided by the National Aeronautics and Space Administration (NASA), Alabama Space Grant Consortium, Research Experiences for Undergraduates (REU) at UAB

\appendix

\section{Brief review of some important mathematical physics results}
\label{sec:MP}

Originally, David Aldous conjectured that the spectral gap of the interchange process is the same as the spectral gap of the symmetric exclusion process and also the same as the spectral gap of the random walk.
(See, for instance, the early paper of Handjani and Jungreis \cite{HandjaniJungreis}.)
That implies FOEL-$1$, and  Caputo, Liggett and Richthammer proved the full Aldous  \cite{CLR}.
Dieker also had important work related to that problem \cite{Dieker}.
Morris \cite{Morris} and also Conomos and Starr \cite{ConomosStarr} had previously proved asymptotic FOEL-$1$ on boxes.

After the proof by Caputo, Liggett and Richthammer, it was argued that asymptotic FOEL holds for boxes in \cite{NachtergaeleSpitzerStarrPre}
by Nachtergaele, Spitzer and one of the authors.
This means that for sufficiently large boxes $\{1,\dots,L\}^d$, FOEL-$n$ holds, if $L\geq L_0(n)$,
for each $n$ (where the minimum $L$ value $L_0(n)$ is a function of $n$).
If one considers the interchange process, which may be considered to be a $\mathrm{SU}(n)$ model when $L=n$,
then Alon and Kozma \cite{AlonKozma} have proved many extra interesting inequalities in the ``Aldous ordering.''
Since this model projects onto the Heisenberg model and other $\mathrm{SU}(k)$ models, for $k\leq n$, their results have implications
for all lower projections of the interchange process.

One motivation for all of this is the fact that in mathematical physics, the phase transition
for the quantum Heisenberg ferromagnet has not yet been proved, even though for the quantum Heisenberg
antiferromagnet it was proved by Dyson, Lieb and Simon \cite{DLS}.
Their method for the antiferromagnet was reflection positivity.
But that property does not hold for the quantum Heisenberg ferromagnet, as proved by Speer \cite{Speer}.
Later, Correggi, Giuliani and Seiringer did prove  the type of inequalities that one would expect from
linear spin wave analysis \cite{CGS}.
But, the phase transition for the quantum Heisenberg ferromagnet has still not been proved.
If one changes the model to add any small amount of Ising-like anisotropy it was proved by Tom Kennedy \cite{Kennedy},
but that model has $\mathrm{U}(1)$ symmetry instead of $\mathrm{SU}(2)$.

\section{The Matlab code Hulthen.m}
\label{sec:code}

The following Matlab script implements the Hulth\'en bracket basis for calculating $E^{\mathrm{FM}}_{\min}(S;1/2,L)$ for each $S$.
We have written it for $L=20$ which is the largest length spin ring we could attain with our computing resources.

\scriptsize

\begin{verbatim}
E11 = sparse([1 0; 0 0]);
E12 = sparse([0 1; 0 0]);
E21 = sparse([0 0; 1 0]);
E22 = sparse([0 0; 0 1]);
h = kron(E11,E22)+kron(E22,E11)-kron(E12,E21)-kron(E21,E12);

L=20;
fileID = fopen('L20sparse.txt','a');

dim=2^L;

Adj = diag(ones(L-1,1),1);
Adj(1,L)=1;

H = sparse(dim,dim);
for x=1:(L-1)
    for y=(x+1):L
        if Adj(x,y)
            H=H+kron(speye(2^(x-1)),kron(E11,kron(speye(2^(y-x-1)),kron(E22,speye(2^(L-y))))));
            H=H+kron(speye(2^(x-1)),kron(E22,kron(speye(2^(y-x-1)),kron(E11,speye(2^(L-y))))));
            H=H-kron(speye(2^(x-1)),kron(E12,kron(speye(2^(y-x-1)),kron(E21,speye(2^(L-y))))));
            H=H-kron(speye(2^(x-1)),kron(E21,kron(speye(2^(y-x-1)),kron(E12,speye(2^(L-y))))));
        end
    end
end

% % Cas=zeros(2^L);
% Cas = sparse(dim,dim);
% for x=1:(L-1)
%     for y=(x+1):L
%         Cas=Cas+kron(speye(2^(x-1)),kron(E11,kron(speye(2^(y-x-1)),kron(E22,speye(2^(L-y))))));
%         Cas=Cas+kron(speye(2^(x-1)),kron(E22,kron(speye(2^(y-x-1)),kron(E11,speye(2^(L-y))))));
%         Cas=Cas-kron(speye(2^(x-1)),kron(E12,kron(speye(2^(y-x-1)),kron(E21,speye(2^(L-y))))));
%         Cas=Cas-kron(speye(2^(x-1)),kron(E21,kron(speye(2^(y-x-1)),kron(E12,speye(2^(L-y))))));
%     end
% end

DnSpinMat=[];
idxLst=[];
NumBracketLst=[];
for idx = 1:dim,
    base2rep = dec2base(idx-1,2);
    LengthBase2 = length(base2rep);
    DnSpinLst=[];
    for kctr=1:LengthBase2,
        if base2rep(kctr)=='1',
            DnSpinLst = [DnSpinLst,L-LengthBase2+kctr];
        end
    end
    if length(DnSpinLst)==L/2,
        idxLst=[idxLst,idx];
        DnSpinMat=[DnSpinMat;DnSpinLst];   
    end
end
%DnSpinMat
%idxLst
vectorLst=[];
for kctr=1:length(DnSpinMat)
    bracketLstLeft=[];
    bracketLstRight=[];
    for j=1:(L/2),
        if DnSpinMat(kctr,j)>j+length(bracketLstLeft),
            bracketLstRight=[bracketLstRight,DnSpinMat(kctr,j)];
            UpSpinCompatibleLst=setdiff(setdiff(1:DnSpinMat(kctr,j),DnSpinMat(kctr,1:j)),bracketLstLeft);
            bracketLstLeft=[bracketLstLeft,max(UpSpinCompatibleLst)];
        end
    end
%    [bracketLstLeft;bracketLstRight]
    NumBracketLst=[NumBracketLst,length(bracketLstLeft)];
    v = sparse(idxLst(kctr),1,1,dim,1);
    for j=1:length(bracketLstLeft),
        a=bracketLstLeft(j);
        b=bracketLstRight(j);
        v = v-kron(speye(2^(a-1)),kron(E21,kron(speye(2^(b-a-1)),kron(E12,speye(2^(L-b))))))*v;
    end
    ExcessDnSpin = setdiff(DnSpinMat(kctr,:),bracketLstRight);
    for j=1:length(ExcessDnSpin)
        a=ExcessDnSpin(j);
        v = kron(speye(2^(a-1)),kron(E12,speye(2^(L-a))))*v;
    end
    vectorLst = [vectorLst,v];
end

for numBrktCtr = 0:(L/2)
%    numBrktCtr
    fprintf(fileID,'%d\r\n\r\n',numBrktCtr);
    Indices = find(NumBracketLst==numBrktCtr);
    vMat = vectorLst(:,Indices);
    Amat = vMat'*H*vMat;
    Bmat = vMat'*vMat;
    E0 = eigs(Amat,Bmat,1,'smallestreal','Tolerance',1e-4);
    fprintf(fileID,'%f\r\n\r\n',E0);
end
fclose(fileID);
\end{verbatim}

\normalsize
To obtain the full eigenvalue list (for sufficiently small values of $L$ such as $L=14$, where this is feasible)
change the final for-loop to this:

\scriptsize

\begin{verbatim}

for numBrktCtr = 0:(L/2)
    numBrktCtr
    Indices = find(NumBracketLst==numBrktCtr);
    vMat = vectorLst(:,Indices);
    Amat = vMat'*H*vMat;
    Bmat = vMat'*vMat;
    [V,D] = eig(full(Amat),full(Bmat),'chol');
        fprintf(fileID,'%f\r\n\r\n',diag(D));
end
fclose(fileID);
end
\end{verbatim}

\normalsize

\section{The Temperley-Lieb, Hulth\'en bracket basis}

\label{app:TL}

It uses the diagrammatic representation of Hulth\'en and of Temperley and Lieb
\cite{Hulthen,TemperleyLieb}.
For the necessary background on the diagrammatic representation, we recommend \cite{CarterFlathSaito,FrenkelKhovanov,KauffmanLins}.
The diagrammatic representation takes much time to introduce.
Therefore, we eschew it. But fundamentally the proof relies on the Perron-Frobenius theorem, similar
in spirit to the Lieb-Mattis theorem on ordering of energy levels \cite{LiebMattisOEL}.

To say a bit more, the case of proving a ferromagnetic version of the Lieb-Mattis theorem was considered
before for the case of an open chain without periodic boundary conditions.
That was in \cite{NSSfoel,NachtergaeleStarr}.
The way those theorems proceeded was in using the diagrammatic Temperley-Lieb
representation. In the Hulth\'en bracket basis, the negative of the Hamiltonian
$-\sum_{\alpha=1}^{L-1} h_{\alpha,\alpha+1}^{\mathrm{FM}}$
has all positive off-diagonal matrix elements.
Then one can apply the Perron-Frobenius theorem much like in the Lieb-Mattis theorem.

In those papers, one could not consider periodic boundary conditions because making a Temperley-Lieb generator $-h^{\mathrm{FM}}_{L,1}$,
from site $L$ to $1$, would result in crossings.
In the diagrammatic representation, crossings are resolved in  way that introduces unwanted signs which would ruin the good-signs condition necessary
for the Perron-Frobenius theorem.
In the display below, we give an example of a vector, represented diagrammatically, which results in crossings
when we apply $-h^{\mathrm{FM}}_{L,1}$:
\begin{equation*}
\begin{split}
&\begin{tikzpicture}[xscale=0.75,yscale=0.75]
	\foreach \x in {2,3,...,9}{
		\draw[very thick] (\x,0) -- (\x,-1.5);}
	\fill[black!25!white] (0.5,0) -- (10.5,0) arc (0:180:5cm and 1.5cm);
	\foreach \x in {1,3,...,7}{
		\draw[very thick] (\x,0) arc (180:0:0.5cm);}
	\draw[very thick] (9,0) -- (9,2.675) node[above] {+};
	\draw[very thick] (10,0) -- (10,2.675) node[above] {+};
	\draw[very thick] (1,0) arc (360:180:0.5cm) arc (180:0:5.5cm and 2cm) arc (360:180:0.5cm);
	\draw[very thick] (1,-1.5) .. controls (1,-0.5) and (-1,-1) .. (-1,0.5) arc (180:0:6.5cm and 2cm) .. controls (12,-1) and (10,-0.5) .. (10,-1.5);
	\foreach \x in {1,...,10} {
		\fill[white] (\x,0) circle (6pt);
		\fill (\x,0) circle (3pt);
		\fill[white] (\x,-1.5) circle (6pt);
		\fill (\x,-1.5) circle (3pt);}
	\draw[thin] (9,1.5) circle (0.25cm);
	\draw[thin] (10,1.125) circle (0.25cm);
	\draw[thin] (9,2.15) circle (0.25cm);
	\draw[thin] (10,1.9) circle (0.25cm);
\end{tikzpicture}
\end{split}
\end{equation*}

But if one restricts to singlet state vectors as the basis elements for the matrix entries
then this problem is bypassed.
In a spin singlet all the spin sites on the top row are paired together in a non-crossing
perfect matching. In particular there are no free strands extending to the top edge of the diagram
(or spanning from the top row to the bottom row in some conventions).
Then one may join left to right by a Temperley-Lieb generator without any crossings.
But this only works for the spin-singlet sector.

An example of a calculation is shown to demonstrate this point:
it shows the result of appling $-h^{\mathrm{FM}}_{L,1}$ to a particular Hulthen
bracket basis element, using the diagrammatic representation
\begin{equation*}
\begin{split}
&\begin{tikzpicture}[xscale=0.75,yscale=0.75]
	\foreach \x in {2,3,...,9}{
		\draw[very thick] (\x,0) -- (\x,-1.5);}
	\fill[black!25!white] (0.5,0) -- (10.5,0) arc (0:180:5cm and 1.5cm);
	\foreach \x in {1,3,...,9}{
		\draw[very thick] (\x,0) arc (180:0:0.5cm);}
	\draw[very thick] (1,0) arc (360:180:0.5cm) arc (180:0:5.5cm and 2cm) arc (360:180:0.5cm);
	\draw[very thick] (1,-1.5) .. controls (1,-0.5) and (-1,-1) .. (-1,0.5) arc (180:0:6.5cm and 2cm) .. controls (12,-1) and (10,-0.5) .. (10,-1.5);
	\foreach \x in {1,...,10} {
		\fill[white] (\x,0) circle (6pt);
		\fill (\x,0) circle (3pt);
		\fill[white] (\x,-1.5) circle (6pt);
		\fill (\x,-1.5) circle (3pt);}
\end{tikzpicture}\\[10pt]
&\hspace{3cm} =\,
\raisebox{0.5cm}{
\begin{minipage}{8cm}
\begin{tikzpicture}[xscale=0.75,yscale=0.75]
	\foreach \x in {3,5,7}{
		\draw[very thick] (\x,0) arc (180:0:0.5cm);}
	\draw[very thick] (2,0) .. controls (2,2) and (9,2) .. (9,0);
	\draw[very thick] (1,0) .. controls (1,3) and (10,3) .. (10,0);
	\foreach \x in {1,...,10} {
		\fill[white] (\x,0) circle (6pt);
		\fill (\x,0) circle (3pt);}
\end{tikzpicture}
\end{minipage}}
\end{split}
\end{equation*}
No matter how one pairs up the spins in the grey region into any non-crossing perfect matching,
the resulting diagram will either be: 
another different Hulthen bracket basis vector, with coefficient 1;
or else a ``bubble'' resulting in a negative coefficient $-2$ but with the same
diagrammatric basis element, yielding a contribution to a diagonal matrix element
(which is allowed to be negative in the Perron-Frobenius theorem).
So all off-diagonal matrix entries are positive in the Hulthen bracket basis for the negative
of the Hamiltonian. So the Perron-Frobenius theorem applies.

\baselineskip=12pt
\bibliographystyle{plain}

\end{document}